
\documentclass[10pt,journal,compsoc]{IEEEtran}
%


%

\vfuzz2pt 
\hfuzz2pt 
\newtheorem{thm}{Theorem}[section]

\newtheorem{lem}[thm]{Lemma}

\newtheorem{defn}[thm]{Definition}



\newcommand{\K}{{\cal K}}

\newtheorem{theorem}{Theorem}

\newtheorem{remark}{Remark}

\usepackage{fancyhdr}


\usepackage{verbatim}
\usepackage{graphicx}
\usepackage[dvipsnames]{xcolor}
\usepackage{algorithm}
\usepackage{algorithmic}

\usepackage{amsmath, amssymb}
\usepackage{breqn}
\usepackage{threeparttable}

\usepackage{multirow}
\usepackage[numbers,sort&compress]{natbib}
\usepackage{subfig}
\usepackage{url}
\usepackage{hyperref}
\usepackage{breakurl}
\usepackage{booktabs}
\usepackage{color}
\usepackage{color}
\usepackage{framed}
\definecolor{shadecolor}{gray}{0.85}

%
\ifCLASSOPTIONcompsoc
  \usepackage[nocompress]{cite}
\else
  \usepackage{cite}
\fi
%

%
\ifCLASSINFOpdf
\else
\fi
\hyphenation{op-tical net-works semi-conduc-tor}

\begin{document}
%
\title{Secure Hot Path Crowdsourcing with Local Differential Privacy under Fog Computing Architecture}
%
%
%
%

\author{Mengmeng Yang,~
        Ivan Tjuawinata,~
       Kwok Yan Lam,~\IEEEmembership{Senior Member,~IEEE,}
       Jun Zhao, ~\IEEEmembership{Member,~IEEE,}
        and, Lin Sun~
\IEEEcompsocitemizethanks{\IEEEcompsocthanksitem M. Yang, I. Tjuawinata, K. Y. Lam and J. Zhao are with Nanyang Technological University, Singapore. \protect\\
E-mail: $\{$melody.yang, ivan.tjuawinata, kwokyan.lam, and junzhao$\}$@ntu.edu.sg
\IEEEcompsocthanksitem L. Sun is with Tsinghua University, 100084, Beijing, China. \protect\\
E-mail: sunl16@mails.tsinghua.edu.cn
}
\thanks{Manuscript received April 19, 2020; revised August 26, 2020.}
\thanks{(Corresponding author: Jun Zhao)}
}

\IEEEtitleabstractindextext{%
\begin{abstract}
Crowdsourcing plays an essential role in the Internet of Things (IoT) for data collection, where a group of workers is equipped with Internet-connected geolocated devices to collect sensor data for marketing or research purpose.
In this paper, we consider crowdsourcing these worker's hot travel path. 
Each worker is required to report his real-time location information, which is sensitive and has to be protected. 
Encryption-based methods are the most direct way to protect the location, but not suitable for resource-limited devices. 
Besides, local differential privacy is a strong privacy concept and has been deployed in many software systems. 
However, the local differential privacy technology needs a large number of participants to ensure the accuracy of the estimation, which is not always the case for crowdsourcing.   
To solve this problem, we proposed a trie-based iterative statistic method, which combines additive secret sharing and local differential privacy technologies. 
The proposed method has excellent performance even with a limited number of participants without the need of complex computation. Specifically, the proposed method contains three main components: iterative statistics, adaptive sampling, and secure reporting. 
We theoretically analyze the effectiveness of the proposed method and perform extensive experiments to show that the proposed method not only provides a strict privacy guarantee, but also significantly improves the performance from the previous existing solutions. 
\end{abstract}

\begin{IEEEkeywords}
additive secret sharing, local differential privacy, crowdsourcing, hot path statistic.
\end{IEEEkeywords}}

\maketitle

 \thispagestyle{fancy}
\lhead{This paper appears in IEEE Transactions on Services Computing. \url{https://doi.org/10.1109/TSC.2020.3039336} \\ Please feel free to contact us for questions or remarks.}
\rhead{\thepage}
\cfoot{}
\pagestyle{fancy}

\IEEEdisplaynontitleabstractindextext

%
\IEEEpeerreviewmaketitle

\IEEEraisesectionheading{\section{Introduction}\label{sec:introduction}}

%
%
%
%



\IEEEPARstart{M}{obile} crowdsourcing is an important platform in Internet of Things (IoT) paradigm for data collection from a number of sources such as sensors, mobile devices, and vehicles, and it serves as an important block for IoT applications, 
such as environment monitoring \citep{rana2010ear}, traffic condition detection \citep{mohan2008nericell}, 
and point-of-interest characterization \citep{chon2012automatically}. 
In the field of IoT applications, location remains to be the most successful and widely used information, which has provided boosts in various industrial applications and enabled analysis of statistical information, which is essential in improving marketing strategies.
For example, travel agencies can implement crowdsourcing to acquire travellers' travel paths to identify popular travel routes which may help them in seizing a larger market share.

For a traditional crowdsourcing system, the server interacts with the participants directly after receiving the tasks from the requester. 
However, offloading the data to the cloud introduces unforeseeable delay and heavy communication burden, 
especially when multiple interactions are needed. 
A better alternative solution is to take advantage of the nearby infrastructures or devices to process the data and send the processed data back to the cloud. 
This process is called edge computing, also known as fog computing \citep{singh2019fog}. 
Though edge computing enhances crowdsourcing, the collected data can also be used by the adversary to make sensitive inferences. 
The traditional solutions rely on the premise that the fog node is trusted and a series of privacy strategies are performed by the fog node on top of the user's real data. 
However, the fog node may not be trusted either. 
Therefore, the information needs to be protected locally by perturbing it before it leaves the user's devices.

Differential privacy is a provable privacy notation that has emerged as a \textit{de facto} standard for preserving privacy in a variety of areas. 
However it assumes that the data aggregator (fog node) is trusted, which is not always the case in the real world. 
To solve this problem, 
local differential privacy has been proposed, which perturbs the user's data locally, providing a much stronger privacy protection for the user. 
Several methods have been proposed to find the frequent items (e.g. frequently visited travel paths) under local differential privacy protection. 
However, the performance is not satisfactory, it either needs a big privacy budget or a large number of participants to ensure sufficiently high accuracy. 
This is because local differential privacy adds noise to each data record, which reduces the statistical accuracy significantly.

In this paper, we consider the location privacy problem for hot path statistics in the crowdsourcing system under fog computing architecture. 
To provide a much more accurate statistic while preserving the user's location information locally, 
we propose a novel solution that combines  both local differential privacy technology and secret sharing technology. 
The effort of combining the two techniques is not trivial. In particular, schemes constructed by sequentially performing the two techniques one after the other would not provide the desired increase of statistical accuracy. Furthermore, we also need to consider the significant increase of communication cost for users when any general secret sharing scheme is involved.
To solve these problems, we propose to perform randomize response to the sampling process and let users report their real value through secret sharing. 
Therefore, the user's location information is protected and the statistical error only comes from sampling process, which causes the significant improvement in the accuracy. 
On the other hand, due to the honest responses after the sampling process, the security assumption of our design is a more relaxed-yet-realistic assumption compared to the assumption used in local differentially private protocols. More specifically, we assume that the adversary can only corrupt some of the workers but not all of them. To show that the assumption is reasonable, we also provide the probability analysis of such assumption to be violated, i.e. the probability that the adversary manages be more powerful than assumed.

Overall, our main contributions are shown as follows.
\begin{itemize}
\item We propose a new crowdsourcing framework under fog computing architecture. 
Under the proposed framework, the tasks are partitioned and sent to each fog node, 
which not only improves the efficiency of the iterative statistics,  
but also reduces the overall communication cost to the cloud. 
\item We propose a novel solution to protect user's location information for hot travel path statistics. 
The proposed solution combines the secret sharing technology and local differential privacy,  
which provides a good balance between privacy, utility, and communication cost. 
\item We propose two secure reporting methods, Secure Reporitng (SR) and Enhanced Secure Reporting (ESR). 
SR enables the worker to report to a single node on the tree, which significantly reduces the communication cost. 
ESR requires the workers to report to multiple nodes, which enhances the privacy at the cost of small communication cost. 

\item We provide a theoretical analysis of the privacy, utility, and complexity of the proposed method. Besides, we did extensive experiments to evaluate the performance of the proposed methods over both real and synthetic datasets. The experimental results show that our methods perform much better than the state-of-the-art. 
\end{itemize}

The rest of the paper is organized as follows. In Section \ref{Pre}, we introduce the preliminaries. 
We propose our private hot path statistic methods and theoretically analyze their privacy and
utility in Sections \ref{HPS} and \ref{PAU}, respectively. Section \ref{EE} is dedicated to the discussion of the experimental result of the proposed methods. Section \ref{RW} discusses related
works and Section \ref{C} concludes the paper.

\section{Preliminaries} \label{Pre}

\subsection{Local differential privacy}

Different from traditional differential privacy \citep{dwork2006calibrating}, the Local differential privacy perturbs the user's data locally before it leaves the users' devices. 
Only the data owner can access the original data, which provides stronger privacy protection for the user. 
The formal definition is shown as follows:

\begin{defn}[Local Differential Privacy \citep{dwork2014algorithmic}]  

Let $\epsilon\geq 0$ and denote by $\mathcal{D}$ and $\mathcal{R}$ the set of possible data owned by a worker and his possible responses respectively. An algorithm $\mathcal{M}:\mathcal{D}\rightarrow\mathcal{R}$ satisfies $\epsilon$-local differential privacy if for any $t,t'\in \mathcal{D}$ and every possible output subset $S \subseteq \mathcal{R},$ we have
\begin{equation*}
Pr[\mathcal{M}(t)\in S] \leq e^\epsilon Pr[\mathcal{M}(t')\in S].
\end{equation*}
\end{defn}
Intuitively, this implies that given any possible response $S\in \mathcal{R},$ regardless of any background knowledge, the aggregator cannot identify the actual data owned by the worker with much confidence. 


\begin{theorem}[Sequential Composition \citep{dwork2014algorithmic}]
Suppose a method $\mathcal{M} =\{\mathcal{M}_1, \mathcal{M}_2, ..., \mathcal{M}_m \}$ has $m$ steps, 
which are sequentially performed on the same dataset. If each $\mathcal{M}_i$ provides $\epsilon_i$-differential privacy guarantee, then $\mathcal{M}$ provides $(\sum_{i=1}^m \epsilon_i)$-differential privacy. 
\end{theorem}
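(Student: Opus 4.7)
The plan is to establish that, for any two neighboring datasets $D$ and $D'$ (differing in at most one record) and any measurable output set $S$ of the composed mechanism, one has $\Pr[\mathcal{M}(D) \in S] \leq e^{\sum_{i=1}^m \epsilon_i} \Pr[\mathcal{M}(D') \in S]$. I would first reduce the set-wise inequality to a pointwise one: it suffices to show that for every output tuple $r = (r_1, \ldots, r_m) \in \mathcal{R}_1 \times \cdots \times \mathcal{R}_m$, the probability ratio $\Pr[\mathcal{M}(D) = r] / \Pr[\mathcal{M}(D') = r]$ is bounded by $e^{\sum_i \epsilon_i}$, since integrating (or summing) this bound over $r \in S$ gives the set-wise statement.

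Next I would apply the chain rule of conditional probability to factor the joint output distribution as
\begin{equation*}
\Pr[\mathcal{M}(D) = r] = \prod_{i=1}^{m} \Pr[\mathcal{M}_i(D) = r_i \mid r_1, \ldots, r_{i-1}],
\end{equation*}
and identically for $D'$. For each factor I would invoke the $\epsilon_i$-differential privacy of $\mathcal{M}_i$: since the internal randomness of $\mathcal{M}_i$ is independent of the earlier rounds, conditioning on $(r_1, \ldots, r_{i-1})$ leaves $\mathcal{M}_i$ as a bona fide $\epsilon_i$-differentially private mechanism on the input dataset, so the $D$-to-$D'$ ratio of the $i$-th conditional probability is at most $e^{\epsilon_i}$. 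Multiplying these $m$ bounds yields $\prod_{i=1}^m e^{\epsilon_i} = e^{\sum_{i=1}^m \epsilon_i}$, which is exactly the pointwise inequality sought.

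The main obstacle is the adaptive nature of the composition: later mechanisms may select their queries based on the outcomes of earlier ones, so one must justify carefully that the $\epsilon_i$-DP guarantee of $\mathcal{M}_i$ still applies after conditioning on prior outputs. I would resolve this by noting that with $(r_1, \ldots, r_{i-1})$ fixed, $\mathcal{M}_i$ reduces to a specific randomized function of the dataset whose output distribution satisfies the $\epsilon_i$-DP inequality by assumption. A secondary technicality is continuous output spaces, where $\Pr[\mathcal{M}_i = r_i]$ must be replaced by the corresponding density (Radon--Nikodym derivative with respect to a dominating measure); the same factorization and multiplicative bound carry through, and integrating over $S$ concludes the argument.
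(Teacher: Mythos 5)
The paper offers no proof of this statement---it is quoted as a preliminary with a citation to Dwork and Roth---so there is no in-paper argument to compare against. Your proof is correct and is essentially the canonical argument from the cited source: reduce the set-wise bound to a pointwise ratio, factor the joint output distribution by the chain rule, bound each conditional factor by $e^{\epsilon_i}$ using the privacy of $\mathcal{M}_i$, and multiply; your handling of adaptivity and of densities for continuous output spaces are exactly the right technical caveats, and in fact your argument establishes the stronger adaptive version, which subsumes the non-adaptive statement given in the paper.
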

\begin{theorem}[Parallel Composition \citep{dwork2014algorithmic}]
Given a set a privacy mechanisms $\mathcal{M} =\{\mathcal{M}_1, \mathcal{M}_2, ..., \mathcal{M}_m \}$. 
If each $\mathcal{M}_i$ provides $\epsilon_i$-local differential privacy guarantee on a disjointed record of the entire dataset, $\mathcal{M}$ provides $\max\{\epsilon_1,\cdots,\epsilon_m\}$-local differential privacy. 
\end{theorem}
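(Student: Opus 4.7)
The plan is to fix two datasets $D$ and $D'$ differing in exactly one record, say the $r$-th record, and bound the ratio $\Pr[\mathcal{M}(D) \in S] / \Pr[\mathcal{M}(D') \in S]$ for any measurable output set $S$. The key structural fact to exploit is that the disjoint partitioning of records across the mechanisms forces the differing record to lie inside the input of exactly one sub-mechanism, say $\mathcal{M}_j$, while every other mechanism $\mathcal{M}_i$ with $i \ne j$ receives an identical input under both $D$ and $D'$.

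First I would write $\mathcal{M}(D)$ as the tuple $(\mathcal{M}_1(D_1), \ldots, \mathcal{M}_m(D_m))$, where $D_i$ denotes the $i$-th disjoint portion of $D$. Since each $\mathcal{M}_i$ uses independent internal randomness, the joint output distribution factorizes as the product of the marginal distributions. For any product output set $S = S_1 \times \cdots \times S_m$, this yields $\Pr[\mathcal{M}(D) \in S] = \prod_{i=1}^{m} \Pr[\mathcal{M}_i(D_i) \in S_i]$, and analogously for $D'$.

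Forming the ratio, every factor with $i \ne j$ cancels because $D_i = D'_i$ for those indices. Only the $j$-th factor survives, giving a bound of $\Pr[\mathcal{M}_j(D_j) \in S_j]/\Pr[\mathcal{M}_j(D'_j) \in S_j] \le e^{\epsilon_j} \le e^{\max_i \epsilon_i}$ by the $\epsilon_j$-LDP guarantee of $\mathcal{M}_j$. The case of a general, not necessarily product, measurable set $S$ follows by a standard Dynkin or monotone-class extension, or more concretely by writing the probability as an integral of an indicator function and applying Fubini to the product measure.

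The main obstacle is not a deep mathematical one but a matter of precise setup: one must fix a neighbor relation that is compatible with the disjoint-partition assumption, so that a single record change hits only one sub-mechanism, and one must invoke the independence of the internal randomness of the $\mathcal{M}_i$ to obtain the product factorization. Once these two ingredients are in place, the bound drops out immediately, since only the one affected sub-mechanism contributes any non-trivial factor to the privacy ratio, and its contribution is already controlled by $e^{\epsilon_j}$ and therefore by $e^{\max_i \epsilon_i}$.
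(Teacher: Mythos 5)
The paper does not actually prove this statement: it is quoted as a known background result from Dwork and Roth, with only an informal gloss afterwards (``if the algorithms act on the disjoint datasets, the total privacy level equals to the biggest privacy budget''). So there is no in-paper proof to compare against; your proposal has to be judged on its own, and it is the standard, correct argument. The two ingredients you isolate --- (i) a single changed record lands in exactly one block $D_j$ of the partition, so all other marginals are literally identical under $D$ and $D'$, and (ii) independence of the internal randomness gives a product measure, so only the $j$-th factor contributes to the privacy ratio --- are precisely what makes the bound $e^{\epsilon_j}\leq e^{\max_i\epsilon_i}$ fall out. One small caution and one small mismatch with the paper's setting are worth flagging. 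The caution: the Dynkin/monotone-class route you mention as an alternative for non-product $S$ does not quite work as stated, because the family of sets $S$ satisfying $\Pr[\mathcal{M}(D)\in S]\leq e^{\epsilon_j}\Pr[\mathcal{M}(D')\in S]$ is not closed under complementation and hence is not a $\lambda$-system; stick with your second route, namely that $e^{\epsilon_j}\mu_j'-\mu_j$ is a nonnegative measure (this is exactly what $\epsilon_j$-DP of $\mathcal{M}_j$ says), so its product with the remaining marginals is nonnegative and the inequality holds for every measurable $S$ by Fubini. The mismatch: the theorem is phrased for \emph{local} differential privacy, where the paper's Definition~1 compares arbitrary inputs $t,t'$ of a single worker rather than neighbouring datasets; your ``datasets differing in one record'' setup transfers verbatim once you read ``record'' as ``one worker's datum'' and note that in the local model any two values of that datum are neighbours, but you should say this explicitly so the neighbour relation you fix is the one the paper actually uses.
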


The composition theory states that if multiple  differentially private algorithms act on the same dataset sequentially, the total privacy level equals to the sum of the privacy budget of each differential privacy algorithm. If the algorithms act on the disjoint datasets, the total privacy level equals to the biggest privacy budget. 

\subsection{Randomized response}

Randomized response was proposed by Warner et al. \citep{warner1965randomized} 
as a survey technology to eliminate evasive answer bias. 
It is a typical mechanism to achieve local differential privacy. 
%

%
Let $R$ be a set of $d$ possible true values that a user can have and let $t\in R$ be the value of a user $w.$ Denote by $\hat{t},$ a random variable which represents the response of the user $w$ with sample space $R.$ 
The generalized randomized response works as follow, for any $v\in R$

\begin{equation}
Pr[\hat{t}=v]=\begin{cases}
p=\frac{e^\epsilon}{e^\epsilon+d-1},& \mathrm{if} ~t=v \mathrm{~and}\\
q=\frac{1}{e^\epsilon+d-1},& \mathrm{if} ~t\neq v
\end{cases}.
\end{equation} \label{PDRR}


The generalized randomized response outputs the true value with probability $\frac{e^\epsilon}{e^\epsilon+d-1}$ and outputs the one of other values with probability $\frac{1}{e^\epsilon+d-1}$.

 
\subsection{Additive secret sharing}

Secret sharing is a mechanism that distributes the data between participants without giving any of them the direct access to the original data (secret), while still enabling computations \citep{shamir1979share}. 
Let $x$ be a secret value to be shared between $g$ parties, we denote the shares of $x$ to be $\langle x \rangle = (x_0,\cdots, x_{g-1})$, such that $\sum_{i=0}^{g-1} x_i=x$ and party $i$ holds the share $x_i$. 
When $x$ is shared this way, we say that $x$ is additively shared.
Additive secret sharing is homomorphic with respect to addition, which enables local addition with low computation and communication complexity.

\section{Hot path statistic with location privacy-preserving} \label{HPS}

\subsection{Problem definition and system model}

We define the research problem in this section and present the framework of the mobile crowdsourcing system.

\subsubsection{Problem definition}

In this paper, we consider the private hot path statistic problem in a crowdsourcing system with a limited number of workers. Specifically, workers choose tasks released by the requester and, in turn, complete the tasks by reporting their real-time location information privately. 
The objective of the secure hot path statistic problem is to utilize the collected location information to identify some popular travel paths with high rate of visitations while preserving the privacy of the workers' real location information.
Table \ref{tab-parameter} list the notations used in this paper.

\begin{table}[htpb]  \centering
  \caption{Notations}\label{tab-parameter}
  \begin{tabular}{cl}
\hline
    Notation & Description   \\
\hline
$n$ &The number of workers\\
$m$ &The number of locations\\
$d_1,\cdots, d_{|D|}$  &The number of surviving node in each iteration\\
$ \varsigma $ &  A typical node in the Trie \\
$\mathbf{v}$  &  A worker's location vector   \\
$w$  & A typical worker   \\
    $D_1,\cdots, D_{|D|}$  & Date of the statistic   \\
    $L$  & The set of locations  \\
 $\Theta=(\theta_1,\cdots, \theta_{|D|})$ & Threshold for computation for each iteration\\
    $p_{\varsigma}$ & The prefix of the node $\varsigma$  \\
    $P$ & The prefix set\\
   $c(p_\varsigma)$ & The count number of the prefix $p_{\varsigma}$\\
    $\mathcal{P}_{\mathcal{H}}$ & The set of hot paths \\ 
\hline
\end{tabular}
\end{table}

\subsubsection{Framework}

We propose a crowdsourcing framework under fog computing architecture, 
which introduces a fog layer to relieve the cloud burden and improve efficiency. 
Fig. \ref{fig_frame} shows the proposed crowdsourcing framework. 
There are four components, which are shown as follows. 

\begin{figure}[!htp]
\centering
\includegraphics[width=3in]{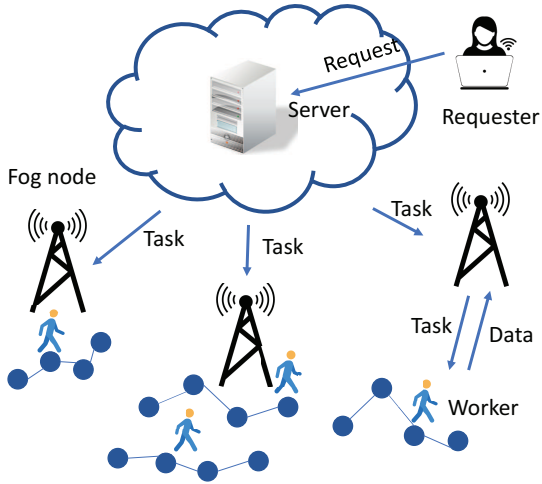}
\caption{The crowdsourcing framework}
\label{fig_frame}
\end{figure}

\begin{itemize}
\item \textbf{Requester:} The requester is the people or origination who would like to collect data for research or marketing purposes. They upload tasks to the crowdsourcing server and wait for the response. 

\item \textbf{Server:} For a traditional crowdsourcing system, the server is able to assign the tasks to the participants effectively, design the incentive mechanism, and manage the reputation values to ensure the high quality of the collected data, which are out of the scope of this paper. 
In the proposed framework, the server distributes the tasks to each fog node according to the region information and only need to aggregate the final statistic results from each fog node. 

\item \textbf{Fog nodes:} Each fog server is a highly virtualized computing system, similar to a light-weight cloud server, and is equipped with an on-board large-volume data storage. The fog node is in charge of assigning the tasks and collecting the data in the appointed region. There are two models of task assignment, Worker Selected Tasks (WST) and Server Assigned Tasks (SAT). In our framework, the participants select the tasks by themselves as we aim to collect the participants' travel paths within the appointed broad region. The participants can choose the task (fog node) according to their journeys. 

\item \textbf{Workers:} Workers select the task according to their journeys and report their location information to the specific fog node accordingly. 
To prevent their real location privacy from being disclosed while ensuring the accuracy of the statistic result, the worker would like to follow the privacy protocol and coordinate with each other. 

\end{itemize} 

\subsubsection{Threat model and security assumption}

In the proposed framework, the server and the fog nodes do not need to be trusted.
Furthermore,
we assume the existence of a semi-honest adversary $\mathcal{A}$ that may choose some users and fog nodes to
corrupt. 
When a user $w$ or a fog node $F$ is corrupted by $\mathcal{A}, \mathcal{A}$ has the access to all the data that is accessible to $w$ or $F$ respectively. The term semi-honest refers to the fact that the adversary $\mathcal{A}$ only has the power to access all the data of the corrupted parties. However, he does not have any power to manipulate the data to disrupt the calculation. The schemes that are discussed in this work can actually be improved to defend against active adversary that manipulates the calculation data. This can be achieved by using a similar verification techniques that are used in Mult-party computation (MPC) schemes such as SPDZ \citep{damgaard2012multiparty}. However, this work focuses on the semi-honest adversary since it is a more common assumption in the field of differential privacy.

\subsection{Hot path statistic LDPss}
To protect the workers' location information, we propose a trie-based private statistic method. The basic idea is that the fog node collects the statistic of different paths iteratively. For each iteration, we adaptively sample part of the workers and let them report their value through secret sharing. 
Specifically, the proposed method contains following three main components. 
\begin{itemize}
    \item \textit{Iterative statistics.} Iterative statistics computes the frequent paths iteratively and prunes the less frequent prefixes during each iteration. 
    \item \textit{Adaptive sampling.} Adaptive sampling privately samples more `functional' workers who can contribute to the final statistics to report their values. 
    \item \textit{Secure reporting.} Secure reporting enables workers to choose one or multiple nodes in the trie and report their values following secret sharing. 
\end{itemize}
The proposed method finds hot paths without going through all the possible combinations and get a more accurate statistics by performing the randomized response to sampling process instead of perturbing the data itself. The process of sampling includes sampling of workers and sampling of paths. The use of secret sharing technique in the reporting phase enables the participating workers to report their true value without disclosing it to the aggregator. By limiting the error to come only from the sampling process, we can expect a drastic increase in the statistical accuracy.

\subsubsection{Iterative statistics}
When the number of locations is large, the combination is innumerable, which means it is infeasible to query all the travel path. 
Tree construction method can solve this problem effectively \citep{wang2018privtrie}. 
Therefore, we propose a trie-based statistic method, which computes the frequent paths iteratively and prunes the less frequent prefixes during each iteration. 

Algorithm \ref{LDPss} shows the whole process of the iterative statistics. 
With all nodes being present initially, in order to protect the real proportion of workers that are functional in the second iteration, half of the workers are sampled to participate in selecting the root nodes at the beginning (Line $1$). 
Specifically, the selected workers encode their location as a location vector $\mathbf{v}=\mathbf{e}_i\in\mathbb{R}^m$, where $\mathbf{e}_i$ is the vector of length $m$ with all entries being zero except for the $i$-th position being $1$ to indicate the worker's true location (Line $3$). 
They report the location information through secret sharing (Line $4$-$7$). 
Having the count information for each node, nodes with insufficient count are then pruned while the remaining nodes serve as the root nodes for the tree construction (Line $9$-$11$).
For each of the subsequent dates, the tree is grown by adding children to the surviving prefixes (Line $14$-$15$). After \textit{Adaptive Sampling} and \textit{Secure Reporting} process are performed, the count of each prefix is then computed (Line $16$-$20$). Given the threshold $\theta_j,$ prefixes with insufficient count are then pruned (Line $21$-$25$). This process is repeated until the last date is reached.
In the end, the top-$k$ frequent prefixes are selected as the top-$k$ hot paths (Line $27$-$29$).   

\begin{figure}[!htp]
\centering
\includegraphics[width=3.5in]{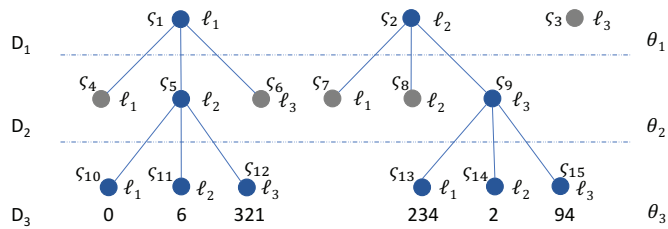}
\caption{Example of the statistic based on the trie}
\label{fig_tree}
\end{figure}

Fig. \ref{fig_tree} shows an example of the tree construction process. 
Initially, only nodes $\varsigma_1$ and $\varsigma_2$ survive while the node $\varsigma_3$ is pruned because its count number is smaller than $\theta_1$. 
Then locations $\ell_1, \ell_2, \ell_3$ are added to the surviving prefixes as their children. 
For the second iteration, only prefixes $(\ell_1,\ell_2)$ and $(\ell_2,\ell_3)$ survive. 
Therefore, the nodes $\varsigma_4$, $\varsigma_6$, $\varsigma_7$, and $\varsigma_8$ are pruned. 
In the end, we find that the path $(\ell_1,\ell_2,\ell_3)$ is the most popular travel path. 

\begin{algorithm}[!htbp]
\caption{LDPss}\label{LDPss}
\hspace*{0.02in} {\bf Input:} 
$D=\{D_1,\cdots, D_{|D|}\}$, location set $L=\{\ell_1, \ell_2, ..., \ell_d\}$, threshold $\Theta=\{\theta_1,\cdots,\theta_{|D|}\}$, requested number of hot paths $k$, secret sharing parameter $g$, privacy budget $\epsilon$, (If $ESR$ is used, report parameter $\alpha$)\\
\hspace*{0.02in} {\bf Output:}
top $k$ hot paths $\mathcal{P}_{\mathcal{H}}$
\begin{algorithmic}[1]
\STATE Randomly sample $n/2$ workers $w_i$, $i\in |\frac{n}{2}|$;

\FOR{worker $w_i=w_1,\cdots, w_{n/2}$ with location $\ell_i\in L$}
\STATE Encode $\ell_i$ as $\mathbf{v}_i=\mathbf{e}_j\in \mathbb{R}^m$;
\STATE Generate $g$ random vectors $\mathbf{s}^{i}_0,\cdots, \mathbf{s}^{i}_{g-1}$ such that $\mathbf{v}_i=\mathbf{s}^{i}_0+\cdots+\mathbf{s}^{i}_{g-1};$
\STATE Send $\mathbf{s}^{i}_t$ to $w_{(i+t-1)\pmod{n/2}+1}$ for $t=1,\cdots, g-1;$ 
\STATE After all communications have been done, compute $R_{w_i}=\sum_{t=0}^{g-1} \mathbf{s}^{((i-1-t)\pmod{n/2}+1)}_t;$ 
\STATE Send $R_{w_i}$ to the fog node;
\ENDFOR
\STATE Server computes $\left(c(\ell_1),\cdots, c(\ell_d)\right)=\sum_{i=1}^{n/2} R_{w_i};$
\STATE $L'\leftarrow L\setminus\{\ell: c(\ell)<\theta_1\}$;

\STATE Initialize the root nodes as the surviving locations : $P_1\leftarrow L'$;
\STATE Choose $\epsilon_2,\cdots, \epsilon_{|D|}\geq 0$ such that $\sum_{j=2}^{|D|}\epsilon_j=\epsilon$;
\FOR{$j=2~ \mathrm{~to~} ~|D|$}

\STATE Let $P_{j-1} =\{p^{j-1}_{\varsigma_1},\cdots, p^{j-1}_{\varsigma_{d_{j-1}}}\};$
\STATE Construct $P_j$ by concatenating paths in $P_{j-1}$ with all possible locations in $L:$
\begin{equation*}
P_j=\bigcup_{1\leq t\leq d_{j-1}, \ell_i\in L}\left\{\left(p^{j-1}_{\varsigma_t}\|\ell_i\right)\right\}
\end{equation*}
Suppose that $P_j=\{p^j_{\varsigma_1},\cdots,p^j_{\varsigma_{d_j}}\};$
\FOR{$i=1 \mathrm{~to~} n$}
\STATE $w_i$ with prefix $p_{w_i}$ runs $t_i\leftarrow AS(P_j,\mathbf{v}_i,\epsilon_s)$ where $t_i$ is either ``participate'' or ``not participate'';
\ENDFOR
\STATE Let the participating workers be $W_j=\{w_{i_1},\cdots, w_{i_{d_j}}\}\subseteq \{w_1,\cdots, w_n\};$
\STATE Fog nodes and $W_j$ jointly run $SR(P_j,g,\epsilon_s)$ or $ESR(P_j,\alpha,g,\epsilon_s,\epsilon_r)$ to get the estimate of $c(p^j_{\varsigma_i})$ corresponding to $p^j_{\varsigma_i}$, denoted as $\hat{\pi}$;
\FOR{$i=1 \mathrm{~to~} d_j$}
\IF{$\hat{\pi}(p_{\varsigma_i})<\theta_j$}
\STATE $P_j\leftarrow P_j\setminus\{p_{\varsigma_i}\};$
\ENDIF
\ENDFOR
\ENDFOR
\STATE $\mathcal{P}_{\mathcal{H}}=\varnothing$;
\STATE Sort $P_{|D|}$ according to the count number;
\STATE $\mathcal{P}_{\mathcal{H}} \leftarrow$ top $k$ prefixes in $P_{|D|}$;
\RETURN{$\mathcal{P}_{\mathcal{H}}$}
\end{algorithmic}
\end{algorithm}

\subsubsection{Adaptive sampling}

In the proposed method, only `functional' workers can contribute to the path statistic. Formally, given a set of node $\{\varsigma_1, \varsigma_2, ..., \varsigma_d\}$ in the trie, if a worker $w_i$ has location with prefix $p_{w_i}\in \{p_{\varsigma_1}, p_{\varsigma_2}, ..., p_{\varsigma_d}\}$, the worker is considered as a functional worker. The adaptive sampling process applies randomized response. 

Algorithm \ref{ASample} shows the detail of the sampling process. 
First, the worker identifies whether he is a functional worker in the first $5$ steps of the algorithm.
If the worker's location vector has a prefix that is one of the prefixes in $P$, then the variable $t$ is marked as $1$ and $0$ otherwise. 
For example, in Fig. \ref{fig_tree}, when the fog node collects the counts of node prefixes in $D_3$, 
if a worker's location has a prefix $(\ell_1,\ell_2,\ell_2)$, the worker is a functional worker. 
On the other hand, if a worker's location has prefix $(\ell_1,\ell_3,\ell_2)$, he is a non-functional worker that cannot contribute to the statistic.
The sampling process is depicted in steps $6$ through $9.$ Specifically, the worker uses randomized response to perturb his indicator $t$ to $\hat{t}.$ The worker is participating in the current iteration if $\hat{t}=1$ and he is not a participant otherwise.

\begin{algorithm}[htbp]
\caption{Adaptive Sampling ($AS(P, p_w,\epsilon)$)}\label{ASample}
\hspace*{0.02in} {\bf Input:}
Prefix set $P$, Worker $w$'s location prefix $p_{w},$ privacy budget $\epsilon$ \\
\hspace*{0.02in} {\bf Output:}
$\hat{t}\in\{0,1\}$ 
\begin{algorithmic}[1]
\IF{$p_w\in P$}
\STATE $t\leftarrow 1;$
\ELSE
\STATE $t\leftarrow 0;$
\ENDIF
\STATE Sample $\hat{t}\in\{0,1\}$ randomly such that
\begin{equation*}
\mathrm{Pr}(\hat{t}=1)=\left\{
\begin{array}{cc}
\frac{e^\epsilon}{e^\epsilon+1},&\mathrm{~if~} t=1\\
\frac{1}{e^\epsilon+1},&\mathrm{~if~} t=0\\
\end{array}
\right.
\end{equation*}

\end{algorithmic}
\end{algorithm}

Instead of letting the aggregator (fog node) to sample the worker, 
the proposed adaptive sampling happens in the worker side. 
The worker can adaptively choose whether to report his/her location according to the prefixes to be counted.  
The randomized sampling protects the workers' location information with local differential privacy guarantee. The fog node has no idea whether the reported node has the same prefix with the worker. 

\subsubsection{Secure reporting}

We propose two secure reporting methods, secure reporting (SR) and enhanced secure reporting (ESR), 
which select one or multiple nodes in each iteration and report the value through secret sharing. 
Specifically, if the workers $w_1,\cdots, w_{n_{\varsigma_i}}$ choose to report to a node $\varsigma_i,$ each worker additively shares his true value to other $g-1$ workers. Based on these shares, each worker can then generate an additive share of the sum which is then reported to the fog node. This ensures the accuracy of the sum while perfectly hiding each worker's true value from the fog node. \\ 
\\
\textbf{SR: Secure Reporting}
\begin{algorithm}[htbp]
\caption{Secure Reporting ($SR~(P, g,\epsilon))$}\label{SR}
\hspace*{0.02in} {\bf Input:}
 prefix set $P=\{p_{\varsigma_1},\cdots, p_{\varsigma_{d^\ast}}\}$, secret sharing scheme parameter $g,$ privacy budget $\epsilon$, participating workers $w_1,\cdots, w_{n^\ast}$\\
\hspace*{0.02in} {\bf Output:}
$\hat{\pi}$ 
\begin{algorithmic}[1]

\FOR{$i=1\mathrm{~to~}n^\ast$}
\STATE Let $w_i$ have location prefix $p_{w_i}$;
\IF{$p_{w_i}= p_{\varsigma_{j}}\in P$}
\STATE Report his intention to report to node $\varsigma_j$;
\ELSE
\STATE Choose any surviving node $\varsigma_j$ uniformly at random and report his intention to report to $\varsigma_j$;
\ENDIF
\ENDFOR
\FOR{$i=1\mathrm{~to~}d^\ast$}
\STATE Let $W_{\varsigma_i}=\{w_1,\cdots, w_{n_{\varsigma_i}}\}$ be the workers that has reported their intention to report to node $\varsigma_i$;
\IF{$n_{\varsigma_i}<g$}
\STATE  $c(p_{\varsigma_i})=0;$
\ELSE
\FOR{$w_j\in W_{\varsigma_i}$ with prefix $p_{w_j}$}
\IF {$p_{w_j}=p_{\varsigma_i}$}
\STATE $r_{w_j}=1$;
\ELSE 
\STATE $r_{w_j}=0$;
\ENDIF 
\STATE $R_{w_i}\leftarrow SS(w_i,r_i)$
\STATE Report $R_{w_j}$ to node $\varsigma_i$. 
\ENDFOR
\STATE fog node receives reports and estimate the number of the path for each node as  $\hat{\pi}(p_{\varsigma_i})=\frac{e^\epsilon+1}{e^\epsilon}c(p_{\varsigma_i})$ where $c(p_{\varsigma_i})=\sum_{j=1}^{n_{\varsigma_i}} R_{w_j}$; 
\ENDIF
\ENDFOR

\end{algorithmic}
\end{algorithm}

\begin{algorithm}[htbp]
\caption{Secret Sharing ($SS(w_i,r_i)$)}\label{ss}
\hspace*{0.02in} {\bf Input:} 
worker $w_i$, value $r_i$\\
\hspace*{0.02in} {\bf Output:}
report $R_{w_i}$
\begin{algorithmic}[1]

\STATE Generate $g$ random vectors $\mathbf{s}^{i}_0,\cdots, \mathbf{s}^{i}_{g-1}$ such that $r_i=\mathbf{s}^{i}_0+\cdots+\mathbf{s}^{i}_{g-1};$
\STATE Send $\mathbf{s}^{i}_t$ to $w_{(i+t-1)\pmod{n/2}+1}$ for $t=1,\cdots, g-1;$ 
\STATE After all communications have been done, compute $R_{w_i}=\sum_{t=0}^{g-1} \mathbf{s}^{((i-1-t)\pmod{n/2}+1)}_t;$ 
\RETURN{$R_{w_i}$}
\end{algorithmic}
\end{algorithm}
Given a functional worker $w$, benefiting from the Adaptive Sampling, 
the worker can only report a single node in the trie without disclosing his true location information. 
Specifically, if $p_w=p_{\varsigma_i}$, the worker only needs to report to the node $\varsigma_i$. If $w$ is a non-functional worker, he chooses a surviving node $\varsigma_j$ randomly and report to it. 
To effectively broadcast the shares, the contributing workers need to report their intention so each contributing worker knows the set of nodes that will report to the same node as him. 
As shown in Algorithm \ref{SR}, workers report their intentions accordingly (Line $1$-$8$). 
Once the intention reporting has been done, the calculation through secret sharing begins. 
Firstly, for nodes with less than $g$ participating workers, the computation phase is omitted and the number of path is recorded as $0$ (Line $10$-$12$). 
This is because there are not enough workers to conduct the protocol and the count number must be smaller than $g$. So if $g$ is set to be sufficiently small such that $\theta_j>g$ for any $j,$ this node will be pruned anyway.
For nodes with at least $g$ participating workers, the calculation phase is performed. 
For the calculation of a node $\varsigma$ with corresponding prefix $p_\varsigma,$ each participating worker $w_j$ holds a private value $r_{w_j}$ which has value $1$ if $p_{w_j}=p_{\varsigma}$ and $0$ otherwise (Line $14$-$19$). 
The participating workers perform a secret sharing protocol described in Algorithm.\ref{ss} to report the sum of their private values to the fog node(Line $20$-$21$). 
The fog node can estimate the number of the path by calculating
\begin{equation}
 \hat{\pi}(p_{\varsigma_i})=\frac{e^\epsilon+1}{e^\epsilon}c(p_{\varsigma_i}), 
\end{equation}
where $c(p_{\varsigma_i})=\sum_{j=1}^{n_{\varsigma_i}} R_{w_j}$.

\begin{figure}[!htp]
\centering
\includegraphics[width=3in]{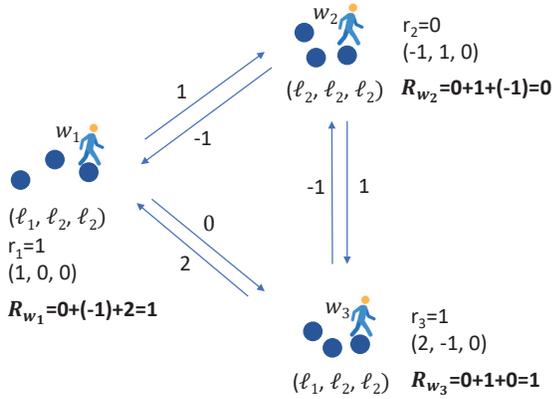}
\caption{Example of the secret sharing}
\label{fig_reportexample}
\end{figure}

Fig. \ref{fig_reportexample} shows an example of the secret sharing process. 
Assume the fog node collects the count of node $\varsigma_{11}$, workers $w_1, w_2$ and $w_3$ intend to report to this node. Workers $w_1$ and $w_3$ have the same prefix as $p_{\varsigma_{11}}$, so their private value is set to be $1$. On the other hand, $w_2$ is  a non-functional worker, who also intends to report to the node $\varsigma_{11}$. 
Therefore, $w_2$ sets his secret value $r_2=0$. Each of them generates $3$ shares and send two shares to other workers. 
In the end, each of them reports $R_{w_i}$ to the node. 
And we find that $\sum R_{w_i}=2$, which is the true value of the statistic.

Though workers report to a specific node, the fog node cannot distinguish whether a worker that reports to it is functional. 
However, though the fog node cannot infer the worker's real location information, 
the fog node can eliminate all the other surviving nodes from being the possible location information of a reporting worker.
Therefore, the worker's location in Algorithm \ref{SR} is actually hidden among the reported node and pruned nodes. 
That is, the worker's location information during the reporting process is protected by $K$-anonymity where $K-1$ is the number of pruned nodes.
Hence, to ensure privacy, we need to ensure some nodes to be pruned when selecting the root nodes.
And, a bigger threshold at the beginning contributes to a bigger $K$. \\
\textbf{ESR: Enhanced Secure Reporting}

To provide more rigorous privacy protection, we further propose a enhanced secure reporting method. 
Instead of reporting to a single node, the workers report to multiple nodes to enhance the privacy protection. 
Under ESR method, the fog node knows nothing except the noisy statistics. 
Worker's location is strictly protected by differential privacy technology.


\begin{algorithm}[htbp]
\caption{Enhanced Secure Reporting ($ESR ~ (P, \alpha, g,\epsilon_s,\epsilon_r))$}\label{ESR}
\hspace*{0.02in} {\bf Input:}
 prefix set $P=\{p_{\varsigma_1},\cdots,p_{\varsigma_{d^\ast}}\}$, secret sharing parameter $g,$ report parameter $\alpha,$ privacy budgets $\epsilon_s, \epsilon_r$, participating workers $w_1,\cdots, w_{n^\ast}$\\
\hspace*{0.02in} {\bf Output:}
$\hat{\pi}$ 
\begin{algorithmic}[1]
\STATE Let $w_i$ have location prefix $p_{w_i}$;
\STATE $\mathcal{P}_I=\{P'\subseteq P: p_{w_i}\in P',|P'|=\alpha d^\ast\},$;
\STATE $\mathcal{P}_E=\{P'\subseteq P: p_{w_i}\not\in P',|P'|=\alpha d^\ast\}$;
\STATE $\mathcal{P}=\{P'\subseteq P: |P'|=\alpha d^\ast\}$;
\FOR{$i=1$ to $n^\ast$}
\IF{$p_{w_i}=p_{\varsigma_j}\in P$}
\STATE Sample $\mathcal{P}_{w_i}\in \mathcal{P}$ such that for any $P^\ast\in \mathcal{P}$, the probability that $\mathcal{P}_{w_i}=P^\ast$ is defined to be $\mathrm{pr}$ where
\begin{equation*}
\mathrm{pr}=\left\{
\begin{array}{cc}
\frac{e^{\epsilon_r}}{\binom{d^\ast-1}{\alpha d^\ast -1}\cdot e^{\epsilon_r} + \binom{d^\ast-1}{\alpha d^\ast}} &\mathrm{~if~} P^\ast\in \mathcal{P}_I\\
\frac{1}{\binom{d^\ast-1}{\alpha d^\ast -1}\cdot e^{\epsilon_r} + \binom{d^\ast-1}{\alpha d^\ast}} &\mathrm{~if~} P^\ast\in \mathcal{P}_E\\
\end{array}
\right.;
\end{equation*}
\ELSE
\STATE Sample $\mathcal{P}_{w_i}\in \mathcal{P}$ such that for any $P^\ast\in \mathcal{P}$, 

\begin{equation*}
\mathrm{pr}[\mathcal{P}_{w_i}=P^\ast]=
\begin{array}{cc}
\frac{1}{\binom{d^\ast}{\alpha d^\ast}};
\end{array}
\end{equation*}

\ENDIF
\STATE $w_i$ reports his intention to report to all nodes $\varsigma_j\in \mathcal{P}_{w_i}$ 
\ENDFOR

\FOR{$i=1\mathrm{~to~}d^\ast$}
\STATE Let $W_{\varsigma_i}=\{w_1,\cdots, w_{n_{\varsigma_i}}\}$ be the workers that have reported their intention to report to node $\varsigma_i$;
\IF{$n_{\varsigma_i}<g$}
\STATE $c(p_{\varsigma_i})=0;$
\ELSE

\FOR{$w_j\in W_{\varsigma_i}$ with prefix $p_{w_j}$}
\IF {$p_{w_j}=p_{\varsigma_i}$}
\STATE $r_{w_j}=1$;
\ELSE 
\STATE $r_{w_j}=0$;
\ENDIF 
\STATE $R_{w_j}\leftarrow SS(w_j,r_j)$
\STATE Report $R_{w_j}$ to node $\varsigma_i$; 
\ENDFOR
\STATE fog node receives reports and estimate the number of the path for each node as $\hat{\pi}(p_{\varsigma_i})=\frac{(e^{\epsilon_s}+1)\left(e^{\epsilon_r}+\frac{1-\alpha}{\alpha}\right)}{e^{\epsilon_s+\epsilon_r}}c(p_{\varsigma_i})$ where $c(p_{\varsigma_i})=\sum_{j=1}^{n_{\varsigma_i}} R_{w_j}$;

\ENDIF
\ENDFOR

\end{algorithmic}
\end{algorithm}

As shown in Algorithm \ref{ESR}, 
suppose that after the first phase of the iteration, a worker $w_j$ with location prefix $p_{w_j}$ is selected to be participating (Line 1). 
Let $\mathcal{P}_I$ be the prefix set that includes $p_{w_j}$, 
$\mathcal{P}_E$ be the prefix set exclude $p_{w_j}$, 
and  $\mathcal{P}$ be all the possible prefix set with length $\alpha d^\ast$ (Line $2$-$4$). 
If $p_{w_j}\in P$, the worker chooses a subset $\mathcal{P}_{w_j}$ of $P$ of size $\alpha d^\ast$ such that any $\mathcal{P}_{w_j}\in \mathcal{P}_I$ is chosen with probability $\frac{e^{\epsilon_r}}{\binom{d^\ast-1}{\alpha d^\ast -1}\cdot e^{\epsilon_r} + \binom{d^\ast-1}{\alpha d^\ast}}$ and any $\mathcal{P}_{w_j}\in \mathcal{P}_E$ is chosen with probability $\frac{1}{\binom{d^\ast-1}{\alpha d^\ast -1}\cdot e^{\epsilon_r} + \binom{d^\ast-1}{\alpha d^\ast}}$ (Line $6$-$7$). 
On the other hand, if $p_{w_j}\not\in P$, then he chooses a subset of $P$ of size $\alpha d^\ast$ uniformly at random (Line $8$-$10$). 
Workers report their intentions to the fog node after making the private sampling (Line $11$). 
Then the worker's real location information is reported through additive secure reporting (Line $15$-$26$). 
The fog node can estimate the real count of each path by calculating
\begin{equation}
 \hat{\pi}(p_{\varsigma_i})=\frac{(e^{\epsilon_s}+1)\left(e^{\epsilon_r}+\frac{1-\alpha}{\alpha}\right)}{e^{\epsilon_s+\epsilon_r}}c(p_{\varsigma_i}), 
\end{equation}
where $c(p_{\varsigma_i})=\sum_{j=1}^{n_{\varsigma_i}} R_{w_j}$. 
Therefore, the protection of worker's location privacy is hence enhanced via further subset selection of the nodes to report on the tree.

%
%

In the remainder of this paper, we let \textbf{LDPss1} denote the proposed method that the worker reports to a single node via Algorithm~\ref{SR} and \textbf{LDPss2} denote the method that the worker reports to multiple nodes via Algorithm~\ref{ESR}.

\subsubsection{Discussion}

The proposed method achieves two targets that the traditional local differential privacy cannot achieve. 

First, the proposed method can achieve a much higher accuracy even if the privacy budget is quite small. 
The traditional local differential privacy generally has a big error for a much smaller privacy budget due to the big statistical variance. 
On the other hand, the error of the proposed method only comes from the sampling processes (e.g. worker selection and node selection), 
which are controlled by privacy budget $\epsilon$ and parameter $\alpha$ respectively. 
Even when the privacy budget is very small, we can still expect to have around half of the participating workers to contribute to the statistical count by adjusting the parameter $\alpha$. 
The sampling error is quite limited compared with adding noise to the data record directly. 

Second, the proposed method can achieve a good performance even when a limited number of workers participate.
According to the existing deployment, LDP needs millions of participants to ensure the statistical accuracy \citep{erlingsson2014rappor, adp, ding2017collecting}. 
Much more participants are needed for dataset with higher data dimension. 
However, the accuracy of the proposed method is not affected by the number of locations (dimension). 
This is because all the selected workers report the true value through secret sharing instead of a perturbed value to the fog node. 
As we mentioned, the error only comes from the sampling processes, which are independent of the data dimension and have a much smaller statistical variance. 
Therefore, the proposed method can work well for situations with a limited number of participants and is not affected by the data dimension.

\section{Privacy and utility analysis} \label{PAU}
 We provide a detailed analysis of the proposed method in this section.

\begin{lem} \label{sDP}
Let $\epsilon\geq 0$ be the privacy budget given as the input for Algorithm~\ref{ASample}. Then Algorithm \ref{ASample} satisfies $\epsilon$-local differential privacy. 
\end{lem}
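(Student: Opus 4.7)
The plan is to verify the differential privacy definition directly, since Algorithm~\ref{ASample} is essentially a binary randomized response on the indicator bit $t$ obtained from testing whether $p_w\in P$. Because the sample space $\{0,1\}$ has only four subsets and the indicator map $p_w\mapsto t$ takes only two values, the analysis reduces to a finite case check.

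First I would fix two arbitrary prefixes $p_w,p_w'$ that a worker might hold, and let $t,t'\in\{0,1\}$ denote the corresponding values assigned in Lines 1--5 of Algorithm~\ref{ASample}. For any output $\hat{t}\in\{0,1\}$, the probability $\Pr[\hat{t}\mid p_w]$ depends on $p_w$ only through $t$, and it takes the value $\tfrac{e^{\epsilon}}{e^{\epsilon}+1}$ when $\hat{t}=t$ and $\tfrac{1}{e^{\epsilon}+1}$ when $\hat{t}\neq t$. I would then split into cases. If $t=t'$, the two conditional distributions are identical, so the privacy ratio is $1\le e^{\epsilon}$. If $t\neq t'$, then for either choice of $\hat{t}$,
\begin{equation*}
\frac{\Pr[\hat{t}\mid p_w]}{\Pr[\hat{t}\mid p_w']}=\frac{e^{\epsilon}/(e^{\epsilon}+1)}{1/(e^{\epsilon}+1)}=e^{\epsilon},
\end{equation*}
so the pointwise bound $\Pr[\hat{t}\mid p_w]\le e^{\epsilon}\Pr[\hat{t}\mid p_w']$ holds.

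Finally, I would extend this pointwise bound to arbitrary output sets $S\subseteq\{0,1\}$ by summing the inequality over $\hat{t}\in S$, yielding $\Pr[\mathcal{M}(p_w)\in S]\le e^{\epsilon}\Pr[\mathcal{M}(p_w')\in S]$, which is exactly the $\epsilon$-local differential privacy condition. Since $p_w,p_w'$ were arbitrary, this completes the proof.

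There is no real obstacle here; the statement is a direct instantiation of the standard privacy guarantee of binary randomized response, and the only subtlety worth flagging is that the input domain is the set of prefixes but the guarantee only relies on the post-processing invariance through the derived indicator $t$, so collapsing two inputs with the same $t$ is sound.
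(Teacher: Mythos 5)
Your proof is correct and follows essentially the same route as the paper's: both reduce the algorithm to binary randomized response on the indicator $t$ and bound the worst-case likelihood ratio by $\frac{e^{\epsilon}/(e^{\epsilon}+1)}{1/(e^{\epsilon}+1)}=e^{\epsilon}$. Your version is merely more explicit about the case split and the extension from singleton outputs to arbitrary subsets $S$, which the paper leaves implicit.
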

\begin{proof}
For each level, the worker decides whether to report their value according to the fact that whether he/she is a functional worker.  
Let $P$ be the set of prefixes to be considered, $p_{w}$ be a worker's prefix of the travel path, and $\epsilon$ be the privacy budget assigned to each iteration, then for each iteration we have \begin{equation}
\frac{\mathrm{Pr}[AS(P,p_{w},\epsilon)=r]}{\mathrm{Pr}[AS(P,p_{w}',\epsilon)=r]}\leq\frac{e^{\epsilon}}{e^{\epsilon}+1}\left/\frac{1}{e^{\epsilon}+1}\right.=e^{\epsilon}
\end{equation}

Therefore, Algorithm \ref{ASample} satisfies $\epsilon$-local differential privacy. 

\end{proof}

In LDPss1, the Algorithm \ref{ASample} is performed in each iteration. 
According to the composition theory, if each iteration assigns $\epsilon_i$ privacy budget to Algorithm \ref{ASample}, the overall sampling process satisfies $\epsilon$-differential privacy, where $\epsilon=\sum\epsilon_i$. 

Compared to the traditional random sampling, 
the proposed adaptive sampling not only provide a straight differential privacy guarantee, but also enhance the accuracy of the statistic. The rationale is that the differential privacy sampling samples useful data with a higher probability, which results in a higher proportion of effective sample, while the traditional randomized sampling samples all data with the same probability, which results in less effective samples. 
The detailed analysis can be found in Supplementary material.

\begin{lem}\label{lem1}
For any $\epsilon > 0$ and $0<\alpha\leq 1$, we have LDPss2 provides $\epsilon$-local differential privacy. 
\end{lem}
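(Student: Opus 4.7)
My plan is to analyze a single iteration of LDPss2 first, showing that the joint output revealed to the adversary under a fixed iteration satisfies $(\epsilon_s+\epsilon_r)$-local differential privacy, and then compose sequentially across iterations to obtain $\epsilon$-LDP when the per-iteration budgets are chosen so that $\sum_j(\epsilon_s^{(j)}+\epsilon_r^{(j)})=\epsilon$. The secret-sharing step in Algorithm~\ref{ss} discloses nothing additional about an individual worker's private bit $r_{w_j}$ to the semi-honest adversary (since the adversary cannot collect all $g$ shares), so the privacy loss for each worker in one iteration is carried entirely by (i) the sampled indicator $\hat{t}$ produced by Algorithm~\ref{ASample} and (ii) the publicly announced subset $\mathcal{P}_{w}$ of size $\alpha d^{\ast}$ in Algorithm~\ref{ESR}.

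The core technical step is to bound, for any two candidate prefixes $p,p'$ and any observable outcome $o$, the ratio $\Pr[\mathrm{out}=o\mid p_w=p]/\Pr[\mathrm{out}=o\mid p_w=p']$. I would split into cases according to whether $p$ and $p'$ lie in $P$. Writing $a=\binom{d^{\ast}-1}{\alpha d^{\ast}-1}$, $b=\binom{d^{\ast}-1}{\alpha d^{\ast}}$, so that $a+b=\binom{d^{\ast}}{\alpha d^{\ast}}$, the joint probability of the outcome $(\hat t=1,\mathcal{P}_w=P^{\ast})$ takes one of three values: $\tfrac{e^{\epsilon_s}}{e^{\epsilon_s}+1}\cdot\tfrac{e^{\epsilon_r}}{ae^{\epsilon_r}+b}$ when $p$ is functional and $p\in P^{\ast}$, $\tfrac{e^{\epsilon_s}}{e^{\epsilon_s}+1}\cdot\tfrac{1}{ae^{\epsilon_r}+b}$ when $p$ is functional and $p\notin P^{\ast}$, and $\tfrac{1}{e^{\epsilon_s}+1}\cdot\tfrac{1}{a+b}$ when $p$ is non-functional. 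A direct pairwise comparison, using the identity $a+b=\binom{d^{\ast}}{\alpha d^{\ast}}$ and the monotonicity inequality $ae^{\epsilon_r}+b\leq(a+b)e^{\epsilon_r}$, shows each of the six ratios is at most $e^{\epsilon_s+\epsilon_r}$; the outcome $\hat t=0$ is handled analogously and yields a ratio bounded by $e^{\epsilon_s}$.

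Once one-iteration privacy is established, sequential composition (Theorem on composition stated in Section~\ref{Pre}) allows me to aggregate across the $|D|-1$ non-trivial iterations to obtain $\epsilon$-LDP, choosing the per-iteration budgets accordingly. I expect the main obstacle to be the mixed case in which one of $p,p'$ lies in $P$ and the other does not, because the two workers then draw $\mathcal{P}_w$ from different distributions (one tilted by $e^{\epsilon_r}$, the other uniform on all size-$\alpha d^{\ast}$ subsets), so the ratio must be controlled simultaneously by the sampling tilt from Algorithm~\ref{ASample} and the combinatorial ratio $(a+b)/(ae^{\epsilon_r}+b)$; the claim $\epsilon$-LDP hinges on these two factors combining cleanly rather than multiplying adversely, which is precisely why the denominator $ae^{\epsilon_r}+b$ in the ESR sampling law is calibrated as it is.
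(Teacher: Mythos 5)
Your proposal is correct and follows essentially the same route as the paper: reduce to a single iteration via sequential composition, observe that the secret-sharing phase leaks nothing, and bound the three ESR sampling probabilities using the identity $\binom{d^{\ast}-1}{\alpha d^{\ast}-1}+\binom{d^{\ast}-1}{\alpha d^{\ast}}=\binom{d^{\ast}}{\alpha d^{\ast}}$ so that every ratio reduces to comparing $\alpha(e^{\epsilon_r}-1)+1$ with $e^{\epsilon_r}$. The only (harmless, arguably cleaner) difference is that you bound the joint distribution of $(\hat t,\mathcal{P}_w)$ directly by $e^{\epsilon_s+\epsilon_r}$, whereas the paper bounds the adaptive-sampling and subset-selection stages separately by $e^{\epsilon_s}$ and $e^{\epsilon_r}$ and composes them.
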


\begin{proof} 
By the composition theorem of differential privacy, we need to prove each iteration of the LDPss2 satisfy $\epsilon$-local differential privacy, where $\sum_{i=2}^{|D|}\epsilon_i=\epsilon$.  

Recall that each iteration of LDPss2 consists of two perturbation sub-protocols which are detailed in Algorithms~\ref{ASample} and \ref{ESR}. As discussed in Algorithm~\ref{LDPss}, set $\epsilon_s,\epsilon_r\geq 0$ such that $\epsilon_i=\epsilon_s+\epsilon_r.$ Utilizing the same proof technique as used in Lemma~\ref{lem1}, it can be shown that the first half of the iteration provides $\epsilon_s$-local differential privacy. It remains to prove that the second half of the iteration, which is depicted in Algorithm~\ref{ESR} provides $\epsilon_r$-local differential privacy. 

Note that Algorithm~\ref{ESR} can be further divided into two phases, sampling phase and secret sharing phase. The secret sharing phase, which is done from Line 10 onwards, is statistically secure by Lemma~\ref{sssec}. Hence we only needs to ensure that the sampling phase which is done up to Line $9$ provides $\epsilon_r$-local differential privacy. Denote by $\mathcal{M}(p_{w})$ the sampling mechanism which is done for each worker $w$ in Lines $3$ up to $7$ in Algorithm~\ref{ESR} to output the set $\mathcal{P}_w\in \mathcal{P}$ of prefixes corresponding to the fog node that $w$ intends to report to. Then for a worker $w$ with prefix $p_w$,  

\begin{equation} \label{eqsr2}
\small\mathrm{Pr}[\mathcal{M}(p_w)=\mathcal{P}_w]=\left\{
\begin{array}{rl}
\frac{1}{\binom{d}{\alpha d}} & \mathrm{if~} p_w\not\in P\\
\frac{1}{\binom{d-1}{\alpha d-1}\cdot e^{\epsilon_2}  + \binom{d-1}{\alpha d}} & \mathrm{if~} p_w\in P \setminus \mathcal{P}_w\\
\frac{e^{\epsilon_2}}{\binom{d-1}{\alpha d-1}\cdot e^{\epsilon_2} + \binom{d-1}{\alpha d}}& \mathrm{if~} p_w\in \mathcal{P}_w.
\end{array}
\right.
\end{equation}
Denote the three probabilities to be $p_1,p_2$ and $p_3$ respectively. Since $\epsilon_r>0,$ we have $p_2\leq p_3.$ So it is sufficient to prove that $\max\left(\frac{p_1}{p_2},\frac{p_3}{p_1},\frac{p_3}{p_2}\right)\leq e^{\epsilon_r}.$
By Equation~\eqref{eqsr2},
\begin{equation*}
\frac{p_1}{p_2}=\frac{\binom{d-1}{\alpha d-1}\cdot e^{\epsilon_r}  + \binom{d-1}{\alpha d}}{\binom{d}{\alpha d}}=\alpha (e^{\epsilon_r}-1)+1\leq e^{\epsilon_r}.
\end{equation*}
Next, considering $p_3/p_1,$ since $\alpha>0, \epsilon_r>0,$ we have
\begin{equation*}
\frac{p_3}{p_1}=\frac{e^{\epsilon_r}\cdot \binom{d}{\alpha d}}{\binom{d-1}{\alpha d-1}\cdot e^{\epsilon_r}  + \binom{d-1}{\alpha d}}=\frac{e^{\epsilon_r}}{\alpha (e^{\epsilon_r}-1)+1}<e^{\epsilon_r}.
\end{equation*}

Lastly, it is easy to see that $\frac{p_3}{p_2}=e^{\epsilon_r}.$ This shows that for any possible locations $p_w,p_w'$ and any possible output $\mathcal{P}_w\in \mathcal{P}:$

\begin{equation*}
\frac{Pr[\mathcal{M}(p_w)=\mathcal{P}_w]}{Pr[\mathcal{M}(p_w')=\mathcal{P}_w]}\leq e^{\epsilon_r}.
\end{equation*}

Therefore, each iteration of LDPss2 satisfies $\epsilon_i$-local differential privacy and further LDPss2 satisfies $\epsilon$-local differential privacy. 

\end{proof}

The next lemma provides the accuracy analysis of the count estimate of each node in the trie.

\begin{lem}
For any node $\varsigma$ in the trie, let $\tilde{\pi}_1$ and $\tilde{\pi}_2$ be the estimated count the corresponding fog node stores after the execution of Algorithms~\ref{SR} and Algorithms~\ref{ESR} respectively. Then $\tilde{\pi}_1$ and $\tilde{\pi}_2$ are unbiased estimators of $\pi,$ i.e. $E(\tilde{\pi}_1)=E(\tilde{\pi}_2)=\pi$ and their variances are independent of the dimension of the value.
\end{lem}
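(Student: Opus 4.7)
The plan is to reduce both $\tilde{\pi}_1$ and $\tilde{\pi}_2$ to a scaled sum of independent Bernoulli random variables, one per worker whose true prefix equals $p_\varsigma$, and then use the explicit scaling factors in Algorithms~\ref{SR} and~\ref{ESR} to cancel the per-worker success probability. The crucial structural observation, to be established first, is that only workers with $p_{w}=p_\varsigma$ can ever contribute a nonzero summand to $c(p_\varsigma)$: any other worker who ends up reporting to the node $\varsigma$ sets $r_{w}=0$ in the secret-sharing step and so adds $0$ to the aggregated share, while the secret-sharing correctness (already used in the privacy lemma via Lemma~\ref{sssec}) guarantees that $c(p_\varsigma)$ is exactly the sum of the true $r_{w}$ values. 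This collapses the problem to counting a $\{0,1\}$-valued indicator over the $\pi$ workers with prefix $p_\varsigma$ and makes the whole analysis a per-worker Bernoulli computation.

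For $\tilde{\pi}_1$, a worker $w$ with $p_w=p_\varsigma$ contributes a $1$ precisely when she survives adaptive sampling, which by Algorithm~\ref{ASample} happens independently with probability $\tfrac{e^{\epsilon}}{e^{\epsilon}+1}$; a sampled functional worker then deterministically reports to her own node $\varsigma$. Thus $c(p_\varsigma)\sim\mathrm{Binomial}\!\left(\pi,\tfrac{e^{\epsilon}}{e^{\epsilon}+1}\right)$, giving $E[\tilde{\pi}_1]=\tfrac{e^{\epsilon}+1}{e^{\epsilon}}\cdot\pi\cdot\tfrac{e^{\epsilon}}{e^{\epsilon}+1}=\pi$ and $\mathrm{Var}(\tilde{\pi}_1)=\left(\tfrac{e^{\epsilon}+1}{e^{\epsilon}}\right)^{2}\pi\cdot\tfrac{e^{\epsilon}}{(e^{\epsilon}+1)^{2}}=\tfrac{\pi}{e^{\epsilon}}$, which depends on $\pi$ and $\epsilon$ only and therefore is independent of the location-vector dimension $m$.

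For $\tilde{\pi}_2$, a worker $w$ with $p_w=p_\varsigma$ now contributes a $1$ iff she survives the first-stage sampling (probability $\tfrac{e^{\epsilon_s}}{e^{\epsilon_s}+1}$) and, independently, her random report set $\mathcal{P}_w$ contains $p_\varsigma$. The second probability is obtained by summing the distribution in Line~7 of Algorithm~\ref{ESR} over the $\binom{d^\ast-1}{\alpha d^\ast-1}$ subsets containing $p_w$, namely $\tfrac{\binom{d^\ast-1}{\alpha d^\ast-1}e^{\epsilon_r}}{\binom{d^\ast-1}{\alpha d^\ast-1}e^{\epsilon_r}+\binom{d^\ast-1}{\alpha d^\ast}}$; the identity $\binom{d^\ast-1}{\alpha d^\ast-1}/\binom{d^\ast-1}{\alpha d^\ast}=\tfrac{\alpha}{1-\alpha}$ simplifies it to $\tfrac{\alpha e^{\epsilon_r}}{\alpha e^{\epsilon_r}+1-\alpha}$, in which the dimension $d^\ast$ has disappeared. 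Multiplying by the scaling factor $\tfrac{(e^{\epsilon_s}+1)(e^{\epsilon_r}+(1-\alpha)/\alpha)}{e^{\epsilon_s+\epsilon_r}}$ used in Algorithm~\ref{ESR} gives $E[\tilde{\pi}_2]=\pi$, and since each of the $\pi$ contributing workers is again independent Bernoulli with parameter $p=\tfrac{e^{\epsilon_s}}{e^{\epsilon_s}+1}\cdot\tfrac{\alpha e^{\epsilon_r}}{\alpha e^{\epsilon_r}+1-\alpha}$, $\mathrm{Var}(\tilde{\pi}_2)$ is the scaling-factor squared times $\pi\,p(1-p)$, a quantity involving only $\pi,\epsilon_s,\epsilon_r,\alpha$.

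The main obstacle I expect is the ESR case: a priori the joint structure of $\mathcal{P}_w$ across nodes and the presence of non-functional participants who may choose to include $\varsigma$ in their report seem to introduce dimension-dependent cross terms. The argument bypasses this by noting that (i) non-functional or mismatched participants always contribute secret value $0$ and therefore drop out of $c(p_\varsigma)$ regardless of whether they report to $\varsigma$, and (ii) for the variance at a single node it suffices to track the marginal event $\{p_\varsigma\in\mathcal{P}_w\}$, not the full joint distribution over $\mathcal{P}$; once the combinatorial ratio $\binom{d^\ast-1}{\alpha d^\ast-1}/\binom{d^\ast-1}{\alpha d^\ast}=\alpha/(1-\alpha)$ is applied, the dimension factors cancel and leave a clean expression in $\pi,\epsilon_s,\epsilon_r,\alpha$, which is exactly what is needed for the dimension-independence claim.
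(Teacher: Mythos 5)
Your proposal is correct and follows essentially the same route as the paper: both reduce $c(p_\varsigma)$ to a $\mathrm{Binomial}(\pi,p)$ count over the workers whose true prefix is $p_\varsigma$, identify $p$ as $\tfrac{e^{\epsilon}}{e^{\epsilon}+1}$ for SR and $\tfrac{e^{\epsilon_s}}{e^{\epsilon_s}+1}\cdot\tfrac{e^{\epsilon_r}}{e^{\epsilon_r}+(1-\alpha)/\alpha}$ for ESR, and then cancel $p$ against the scaling factor to get unbiasedness and a variance depending only on $\pi$, the privacy budgets, and $\alpha$. Your explicit derivation of the ESR inclusion probability from the subset distribution via $\binom{d^\ast-1}{\alpha d^\ast-1}/\binom{d^\ast-1}{\alpha d^\ast}=\alpha/(1-\alpha)$ is a detail the paper states without proof, but it lands on the same $p$ and the same variance expressions.
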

\begin{proof} We consider the two methods separately. 
\\
\textit{LDPss1:}

Recall that in LDPss1, there is one sampling through Algorithm~\ref{ASample} to decide whether a worker participates in the calculation. Once this is decided, any participating worker only reports to one node truthfully. In other words, fixing a node $\varsigma$ in $i$-th depth of the trie, its estimate $\hat{\pi}_1$ only depends on the number of workers with the same prefix as $p_{\varsigma}$ that participate on the computation. By design, each of these workers has an identical and independent probability to participate in the computation of the count of node $\varsigma.$ Hence $\hat{\pi}_1= \frac{e^{\epsilon_i}+1}{e^{\epsilon_i}}c(\varsigma)$, where $c(\varsigma)$ can be seen as a variable that follows Binomial distribution with $\pi$ experiments and success probability $\frac{e^{\epsilon_i}}{e^{\epsilon_i}+1}$. 
Therefore, we have
\begin{align}
E(\hat{\pi}_1)&=\frac{e^{\epsilon_i}+1}{e^{\epsilon_i}} E[c(\varsigma)] 
=\frac{e^{\epsilon_i}+1}{e^{\epsilon_i}} \pi \frac{e^{\epsilon_i}}{e^{\epsilon_i}+1} 
=\pi
\end{align}
and 
\begin{align}
\nonumber\mathrm{Var}[\tilde{\pi}_1]&=(\frac{e^{\epsilon_i}+1}{e^{\epsilon_i}})^2 \mathrm{Var}[c(\varsigma)] \\
\nonumber&=\left(\frac{e^{\epsilon_i}+1}{e^{\epsilon_i}}\right)^2\pi \frac{e^{\epsilon_i}}{e^{\epsilon_i}+1} \frac{1}{e^{\epsilon_i}+1} \\
&=\frac{\pi}{e^{\epsilon_i}}
\end{align}


\textit{LDPss2:}

Similar to method LDPss1, 
the count of each node $c(\varsigma)$ is a binomial random variable with $\pi$ experiments. 
The success probability $p$ equal to the probability that a worker with location prefix $p_w$ reports to the node $\varsigma$ with $p_{\varsigma}= p_w$. Specifically, 
\begin{equation}
p=\frac{e^{\epsilon_s}}{e^{\epsilon_s}+1}\cdot\frac{e^{\epsilon_r}}{e^{\epsilon_r}+\frac{1-\alpha}{\alpha}}
\end{equation}

Recall that the estimation of the count $\hat{\pi}_2=\frac{1}{p}c(\varsigma)$, 
the expectation of the estimation is 
\begin{align}
E[\hat{\pi}_2]=\frac{1}{p}E[c(\varsigma)]=\frac{1}{p} \pi p = \pi 
\end{align}
and 
\begin{align}
\nonumber \mathrm{Var}[\hat{\pi}_2]&=\frac{1}{p^2}\pi p (1-p)\\
&=\pi\cdot\frac{e^{\epsilon_r}+\frac{1-\alpha}{\alpha}e^{\epsilon_s}+\frac{1-\alpha}{\alpha}}{e^{\epsilon_s+\epsilon_r}}
\end{align}

We find that except the inherent variance of the dataset, 
the variance is only affected by the sampling probability for both methods. 
Therefore, the high dimension dataset cannot affect the accuracy of the estimation. 
\end{proof}

\textbf{Computation cost.} The proposed method utilizes the additive secret sharing technology. 
Different from the complex encryption calculation, 
the proposed method only needs to perform several addition and sampling operations, which do not consume too much computational resources. 
Specifically, in each iteration, LDPss1 requires each worker to perform $g$ sampling process ($1$ to determine his participation to the calculation and $g-1$ to generate the secret shares of his true value) along with $g-1$ addition operations. 
For the second scheme, LDPss2, each worker is required to perform $(1+(g-1)\alpha d)$ sampling processes and $(g-1)\alpha d$ addition operations. 
Therefore, assuming that $g$ is a constant, the computation costs for the $i$-th iteration of LDPss1 and LDPss2 are $O(1)$ addition + $O(1)$ sampling and $O(\alpha d_i)$ addition + $O(\alpha d_i)$ sampling respectively.

\textbf{Communication cost.} In each iteration, the worker needs to report their intention, broadcast the random value to other $g-1$ workers, then reports the additive value to the fog node. For LDPss1, the worker only reports to a single node, the communication cost is $g+1$ values. 
For LDPss2, the worker reports to $\alpha d_i$ nodes. 
Therefore, the communication cost for LDPss2 is $\alpha d_i (g+1)$ values. 
It can be shown that the system provides information theoretical security against passive adversary. 
Therefore, assuming that $g$ is a constant, the communication costs for the $i$-th iteration of LDPss1 and LDPss2 are $O(1)$ and $O(\alpha d_i)$ values respectively.

\textbf{Efficiency analysis.} The proposed method is performed under fog computing architecture. 
Benefiting from the framework, the worker only needs to communicate with the local fog devices for every iteration. 
The fog node will send the final statistical result to the cloud. 
Therefore, instead of multiple rounds of long-distance communications between a large number of workers and cloud, under the proposed framework, only one round of communication is needed between each fog node and the cloud. All other communication happened locally between workers and the fog node. 
Compared to schemes that are not based on fog node architecture, this provides our scheme with a significant improvement in terms of the communication efficiency and latency.

\begin{remark} The significant advantage of the proposed method is that it combines the advantages of both encryption calculation and differential privacy. Table~\ref{compare} shows the comparison of these methods.  
\end{remark}

\newcommand{\tabincell}[2]{\begin{tabular}{@{}#1@{}}#2\end{tabular}}
\begin{table*}[!tph]  
\renewcommand\arraystretch{1.5}
\centering
\begin{threeparttable}[b]
  \caption{Comparison of four methods}\label{compare}
  \begin{tabular}{|c|c|c|c|c|c|}
\hline
    Methods & \tabincell{l}{Privacy} & Variance & \tabincell{l}{Computation Cost \\ (Addition + Sampling)}& \tabincell{l}{Communication\\ Cost} & Accuracy  \\
\hline

   SS \tnote{1}  & SS & $0$ & $ O(nd)+O(nd)$  & $ O(nd)$ & Accurate  \\
   \hline
    LDP  & $\epsilon$-LDP  & $\pi \frac{e^\epsilon(d-1)}{(e^\epsilon-1)^2}+(n-\pi)\frac{e^\epsilon+d-2}{(e^\epsilon-1)^2}  $ & $0+O(1)$ & $O(1)$ & Low  \\
      \hline
    LDPss1 &\tabincell{l}{$\epsilon$-LDP, SS, and \\K-anonymity} & $\frac{\pi}{e^\epsilon}$ &$O(1)+O(1)$ &$O(1)$ & High \\
    \hline
    LDPss2 & $\epsilon$-LDP and SS  & $\pi\cdot\frac{e^{\epsilon_2}+\frac{1-\alpha}{\alpha}e^{\epsilon_1}+\frac{1-\alpha}{\alpha}}{e^{\epsilon_1+\epsilon_2}}$ &$O(\alpha d)+O(\alpha d)$  &$O(\alpha d)$ & High \\
\hline
\end{tabular}
\begin{tablenotes}
  \item[1] SS: Secret Sharing
\end{tablenotes}
\end{threeparttable}
\end{table*}


The figures provided in Table~\ref{compare} is calculated under the assumption that all methods are used in a trie-based structure and the cost is calculated for a single user for each iteration of the protocol. We note that for all the methods that are considered, relative to the cost incurred in encryption based protocols, their costs are very small with some subtle differences between them. A protocol that is based on secret sharing scheme provides high utility as well as information theoretical security guarantee for all the data except for its output's privacy. In particular, it does not provide any differential privacy and linkage attack can be easily done. The high accuracy guarantee comes with a very high communication cost. On the other extreme, protocols based on traditional LDP techniques have a very small communication cost but its accuracy is limited especially for the case where there are limited number of participants. Our schemes provide some trade-off between the two traditional techniques. LDPss1 provides a high statistical accuracy with very small communication cost. However, this comes with the cost of a weaker privacy guarantee. LDPss2 provides a better privacy guarantee compared to LDPss1 with a slightly higher communication cost.

\begin{remark}
The proposed method has such a good performance is because the local differential privacy is applied on the sampling process to disturb the statistic instead of the data itself. 
Therefore, the privacy of the proposed method is not as strict as pure local differential privacy. 
However, the worker's data still can be protected well under the the proposed method own to the additive secret sharing. Lemma~\ref{sssec} shows the security analysis.
\end{remark}

\begin{lem}\label{sssec}
Fix a date $D_i\in D$ and a surviving node in date $D_i, \varsigma.$ Suppose that there are $n_\varsigma$ participating workers in the computation of $c(p_\varsigma).$ Then with high probability, the scheme provides statistical security against passive adversary that controls either
\begin{itemize}
\item up to $n_\varsigma-1$ participating workers or
\item the fog node along with at most $\min(2g-3,n_\varsigma-1)$ participating workers. 
\end{itemize}
\end{lem}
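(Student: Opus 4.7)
The plan is to analyze the two adversarial scenarios in the lemma separately, using the cyclic structure of the additive sharing in Algorithm~\ref{ss}. In both scenarios the adversary's view consists of: (i) the secret inputs $r_j$ of every corrupted worker, (ii) every share generated by or sent to a corrupted worker, and (iii) whatever $R_{w_j}$ values the corrupted fog node observes. My goal is to show that, conditioned on this view, the joint distribution of the honest workers' inputs is statistically indistinguishable from the uniform distribution subject only to the sum leakage already implied by the final count $c(p_\varsigma)$.

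For the first scenario (fog node honest, up to $n_\varsigma-1$ corrupted workers) the argument is short. For any honest worker $w_i$, the share $\mathbf{s}^i_0$ is retained by $w_i$ and is only used as a summand of $R_{w_i}$; since $R_{w_i}$ is sent only to the honest fog node, $\mathbf{s}^i_0$ is a uniformly random value that is statistically independent of the adversary's view. Because $r_i=\mathbf{s}^i_0+\sum_{t=1}^{g-1}\mathbf{s}^i_t$, the input $r_i$ is perfectly masked by $\mathbf{s}^i_0$, and a standard simulator that samples $\mathbf{s}^i_0$ uniformly closes the argument in a few lines.

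The second scenario is where the real work lies. Here the adversary additionally observes every $R_{w_j}$, and in particular can compute $\sum_j r_j$; I would phrase the target so that this is the only leakage. My plan is to treat the equations $R_{w_j}=\mathbf{s}^j_0+\sum_{t=1}^{g-1}\mathbf{s}^{j-t}_t$ for honest $w_j$ as a linear system in the \emph{invisible} shares, namely $\mathbf{s}^j_0$ for every honest $w_j$, together with $\mathbf{s}^j_t$ for every pair of honest sender $w_j$ and honest recipient $w_{j+t}$. The decisive combinatorial observation is that each honest worker has exactly $2g-2$ cyclic neighbors (the $g-1$ senders $w_{j-1},\ldots,w_{j-g+1}$ and the $g-1$ recipients $w_{j+1},\ldots,w_{j+g-1}$), so the bound $|C|\leq 2g-3$ guarantees that every honest $w_j$ has at least one honest neighbor and hence at least one unknown term in its equation beyond $\mathbf{s}^j_0$. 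A dimension count on this system, or equivalently an explicit simulator that first samples the honest inputs uniformly subject to the known sum and then fills in the invisible shares consistently, would finish the proof.

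The main obstacle will be the dimension count in the second scenario. The subtlety is that an invisible share $\mathbf{s}^j_t$ appears both in $w_j$'s reconstruction sum $r_j=\sum_t\mathbf{s}^j_t$ and in $R_{w_{j+t}}$, so the honest equations couple to one another through the cyclic adjacency graph on honest indices. The cleanest route I envisage is to construct, for any admissible target assignment of honest inputs with the prescribed sum, an explicit completion of the invisible shares that reproduces the view; the coupling graph has bounded degree so this amounts to solving a sparse cyclic system. I would interpret the ``with high probability'' qualifier in the statement as absorbing any pathological placements of honest workers on the cycle inherited from the random participant selection in Algorithm~\ref{LDPss}, and would dispatch these edge cases by a short counting bound at the end.
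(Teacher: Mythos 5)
The paper does not actually contain a proof of Lemma~\ref{sssec} in the main text; it only states that ``the proof details are given in the Supplementary material,'' which is not part of the provided source. So a direct comparison is impossible, and I can only assess your proposal on its own terms. Your overall strategy --- a simulation argument over the cyclic additive-sharing pattern of Algorithm~\ref{ss}, with the honest inputs shown to be uniform conditioned on the sum --- is the canonical way to prove such a statement, and your treatment of the first scenario (honest fog node, up to $n_\varsigma-1$ corrupted workers) is complete: the retained share $\mathbf{s}^i_0$ of an honest worker reaches the adversary only inside $R_{w_i}$, which goes to the honest fog node, so it acts as a one-time pad on $r_i$. You also correctly identify why the threshold is $2g-3$: each worker has $2g-2$ cyclic neighbours ($g-1$ senders and $g-1$ recipients), and corrupting all of them plus the fog node recovers $\mathbf{s}^j_0$ from $R_{w_j}$ and all of $\mathbf{s}^j_1,\dots,\mathbf{s}^j_{g-1}$ from the recipients, hence $r_j$.

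The genuine gap is in the second scenario, and you have located it yourself without closing it. Having at least one honest neighbour per honest worker guarantees only that no input is revealed by \emph{direct} substitution; it does not by itself rule out recovery through the coupled linear system, since an invisible share $\mathbf{s}^j_t$ sent to an honest recipient $w_{j+t}$ reappears in $R_{w_{j+t}}$ together with that recipient's own invisible shares, and these chains can wrap around the cycle. The claim that ``a dimension count \ldots would finish the proof'' is exactly the statement that needs proving: you must show that for \emph{every} placement of at most $2g-3$ corrupted workers on the cycle, the system of honest $R$-equations has, in the invisible unknowns, a solution space whose projection onto the honest inputs is the full affine subspace of prescribed sum. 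This requires an explicit rank argument or an explicit share-completion procedure (e.g., walking the cycle and assigning invisible shares greedily), not merely a count of unknowns versus equations, because a count does not certify that the relevant projection is surjective. Separately, your reading of ``with high probability'' as absorbing pathological placements is a guess; since the lemma already conditions on the number of corrupted \emph{participating} workers, the qualifier more plausibly refers to the randomness determining which workers end up participating at node $\varsigma$ (via adaptive sampling and random node selection), which the adversary cannot control --- a point the paper's surrounding discussion emphasizes. As written, your proposal is a sound plan whose decisive step remains an IOU.
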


\begin{proof}
The proof details are given in the Supplementary material.
\end{proof}

Lemma~\ref{sssec} is based on the assumption that the participating workers for a node are fully controlled by the adversary, 
which is almost impossible to be achieved due to Adaptive Sampling. 
Therefore, the fog node needs to corrupt more workers to ensure enough corrupted workers report to the targeted node. A more detailed discussion on this can be found in the Supplementary material.
Lemma~\ref{sssec} provides a preliminary mean to determine the value of $g$ to balance between the security level of the scheme and its communication cost.

\section{Experimental evaluation}  \label{EE}

In this section, we evaluate the performance of the proposed hot travel path statistic method through an extensive set of experiments, and in particular, 
we show the impact of variable $k$, privacy budget $\epsilon$, number of participants, the number of locations and parameter $\alpha$ on the performance of the proposed method on both real and synthetic datasets. Besides, we compare the proposed method with state of the art works in terms of accuracy. 

\textbf{Datasets.} We tested the proposed method on two real datasets and one synthetic dataset. 
\begin{itemize}
\item \textit{Gowalla}. Gowalla data contains $6, 442, 8904$ check-in locations of $196,586$ users over the period of Feb. 2009 to Oct. 2010. We extracted all the check-ins in the range $[45, 55]\times [-5, 5]$ on the map and partitioned it into cells of $3 \times 3$. We assign each cell a unit local ID and match the user's check-in data with the corresponding location cell. 
\item \textit{BrightKite}. BrightKite data contains a total of $4,491,143$ check-ins of $58,228$ users over the period of Apr. 2008 - Oct. 2010. We extract the check-ins in New York, USA and partition the whole area into $4 \times 4$ cells. Same with Gowalla, each cell is assigned a unit location ID and the users' check-in information is mapped to the corresponding local cell. 
\item \textit{Synthetic}. We generate a synthetic dataset with $5000$ records. 
Specifically, each record is a location sequence with a length of $5$, which denotes the worker's travel path.  Furthermore, each value of the record can be any number from $1$ to $5$. That is, there are $5^5$ combinations. We sample the record from a Zipf distribution.  
\end{itemize}

Fig. \ref{FIG-frequency} shows the frequency of top-$30$ records ($5$ days travel path) of each dataset.

\begin{figure}[htbp]
\centering

\subfloat[Gowalla]{
\label{Fig-Gowalla}
\includegraphics[scale=0.22]{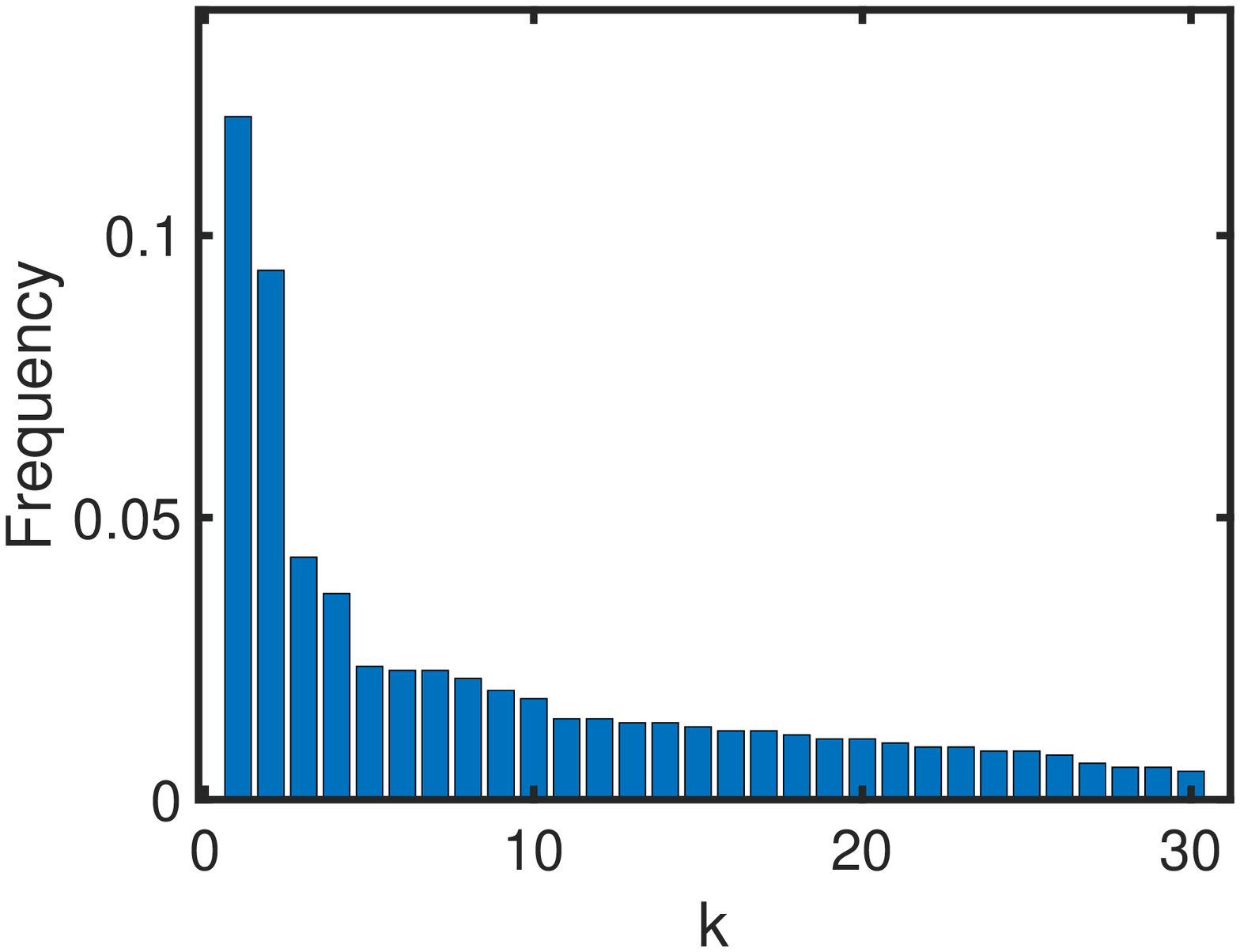}
}
\subfloat[BrightKite]{
\label{Fig-knnmGres}
\includegraphics[scale=0.22]{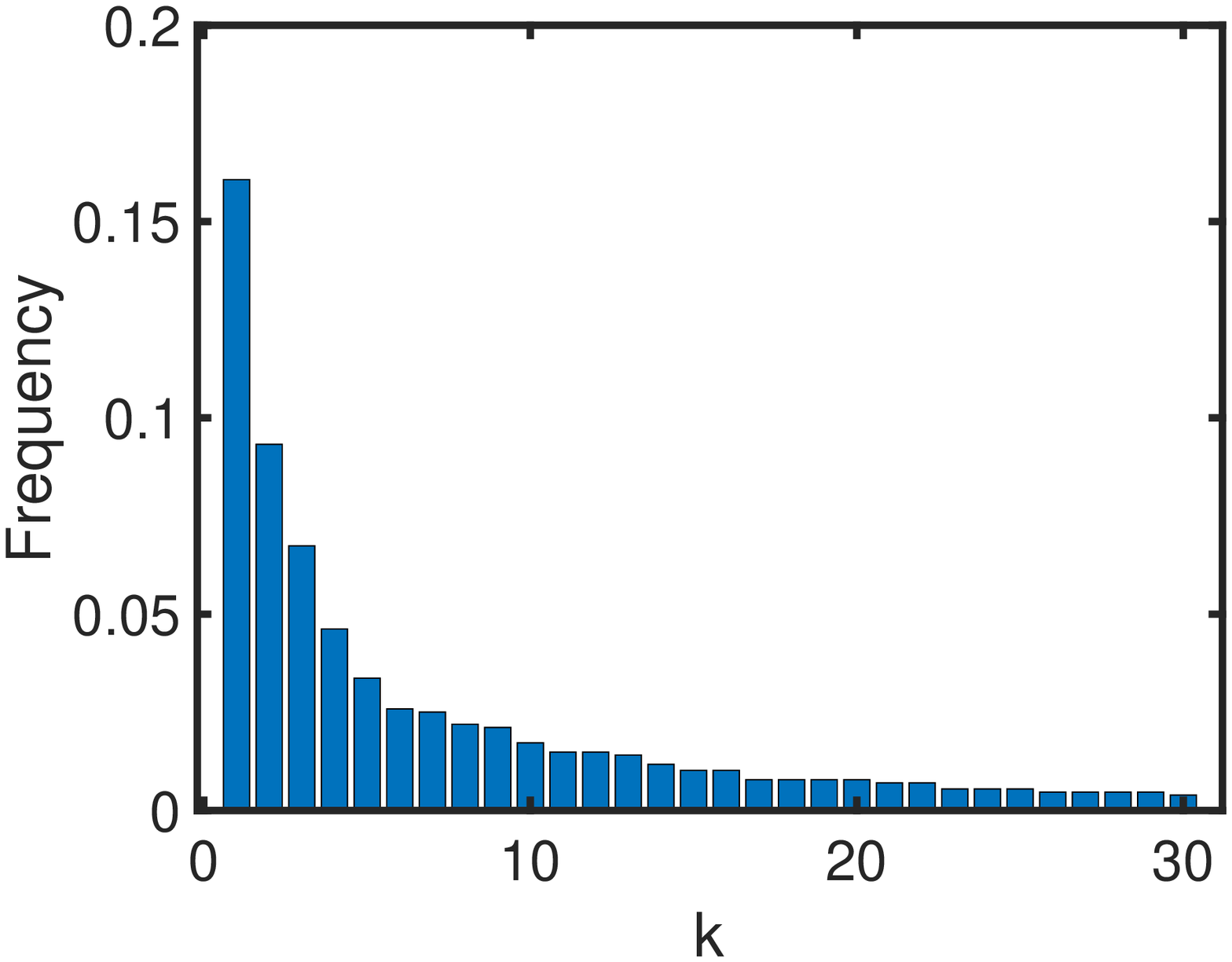}
}

\subfloat[Synthetic]{
\label{Fig-knnmHres}
\includegraphics[scale=0.22]{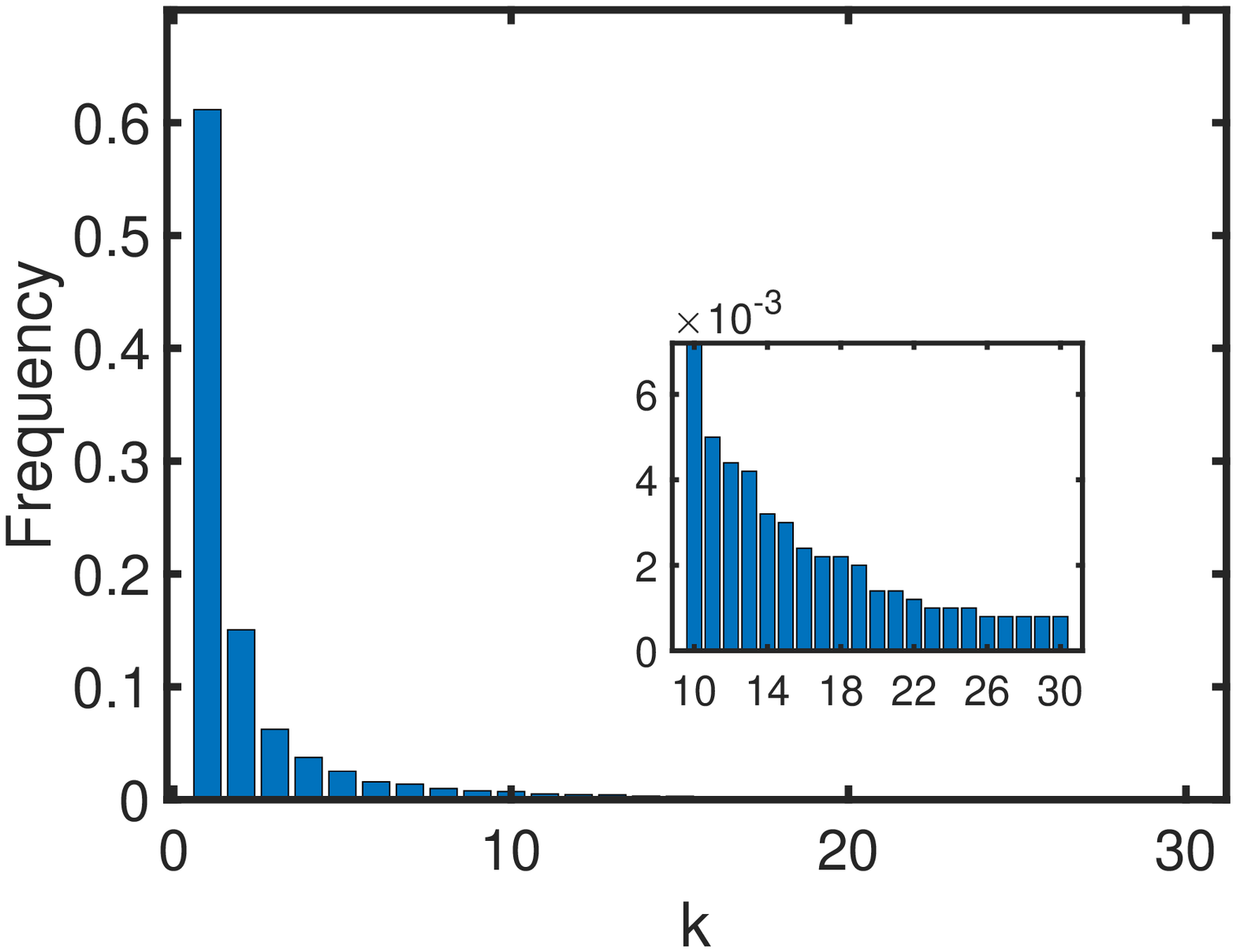}
}
\caption{Frequency of top-$30$ records}
\label{FIG-frequency}
\end{figure}

\textbf{Environment.} All algorithms are implemented in MATLAB, and tested on a machine with Intel Core i5 CPU 2.7Ghz and 8GB RAM. We run each algorithm 10 times, and report the average result.

\textbf{Metric.} We use $\mathrm{Precision}$ to evaluate the effectiveness of the proposed method. 
More specifically, let $P$ be the set of top-$k$ frequent paths statistic based on the original real data,  
and $P'$ be the set of frequent paths found with perturbation. 
The variable Precision measures the fraction of hot paths in the actual query result set are included in the approximated result set, which is shown as follows. 
\begin{equation}
\mathrm{Precision}=\frac{|P\cap P'|}{|P|}, 
\end{equation}

\textbf{Comparison.} We compare our method with Wang et al.'s work \citep{wang2018privtrie}, which is much closer to our work that statistic the frequent items based on trie structure. 
Specifically, they proposed three methods, BSL, IBSL, and PrivTrie. 

\begin{itemize}
\item \textit{BSL.} For BSL method, all the workers are participating in every iteration. 
For each iteration, the workers report the node following the generalized randomized response.
\item \textit{IBSL.} To reduce the privacy budget cost, the workers are partitioned into $|D|$ groups and the workers in each group only participate in one iteration. The reporting process also follows the generalized randomized response. 
\item \textit{PrivTrie.} PrivTrie saves the privacy budget by a careful design of the candidate set construction, which restricts that each user can only report one time to the nodes on his own path. 
To guarantee the accuracy, the size of the candidate set for each node is set to be $b=\max\{\frac{8}{\epsilon^2\alpha^2}, \frac{n}{1000}\}$, where $0<\alpha<1$. When $\epsilon=0.1$, $\min\frac{8}{\epsilon^2\alpha^2}>800$,  which only has small population number of only around $1000$. 
\end{itemize}
Therefore, we compare our methods with both BSL and IBSL in the following experiment.

\textbf{Parameter.} According to paper \citep{wang2018privtrie}, we set the threshold $\theta_i=\frac{\eta \cdot n}{\epsilon_i \sqrt{n^\ast_i}}$ for BSL and IBSL, where $\eta=0.1$ and $n^\ast_i$ is the number of workers participant in each iteration. The threshold for LDPss1 and LDPss2 can be adjusted according to the efficiency of the experiment, as the threshold doesn't affect the performance when it is not very big. 
Regarding privacy parameters, we set $\epsilon_s=\epsilon_r=\frac{1}{2}\epsilon_i$ for LDPss2 in each iteration and $\epsilon_i=\frac{1}{|D|-1}\epsilon$. And we set $\alpha=0.6$. 
Besides, we use $5$ days travel path for both real datasets and synthetic dataset. 
For each setting, average performance is taken over $20$ separate experiments.

\subsection{Impact of the number of queried path k}
We examine the performance of the proposed methods in relation to the number of the queried path $k$ in terms of precision. We vary the number of queried path between $1$ and $20$ in Step 1 on three datasets. 
The result includes $4$ different values for privacy budget $\epsilon$. 

\begin{figure*}[tbp]
\centering

\subfloat[Gowalla:~$\epsilon=0.1$ ]{
\label{Fig-Gowalla01}
\includegraphics[scale=0.22]{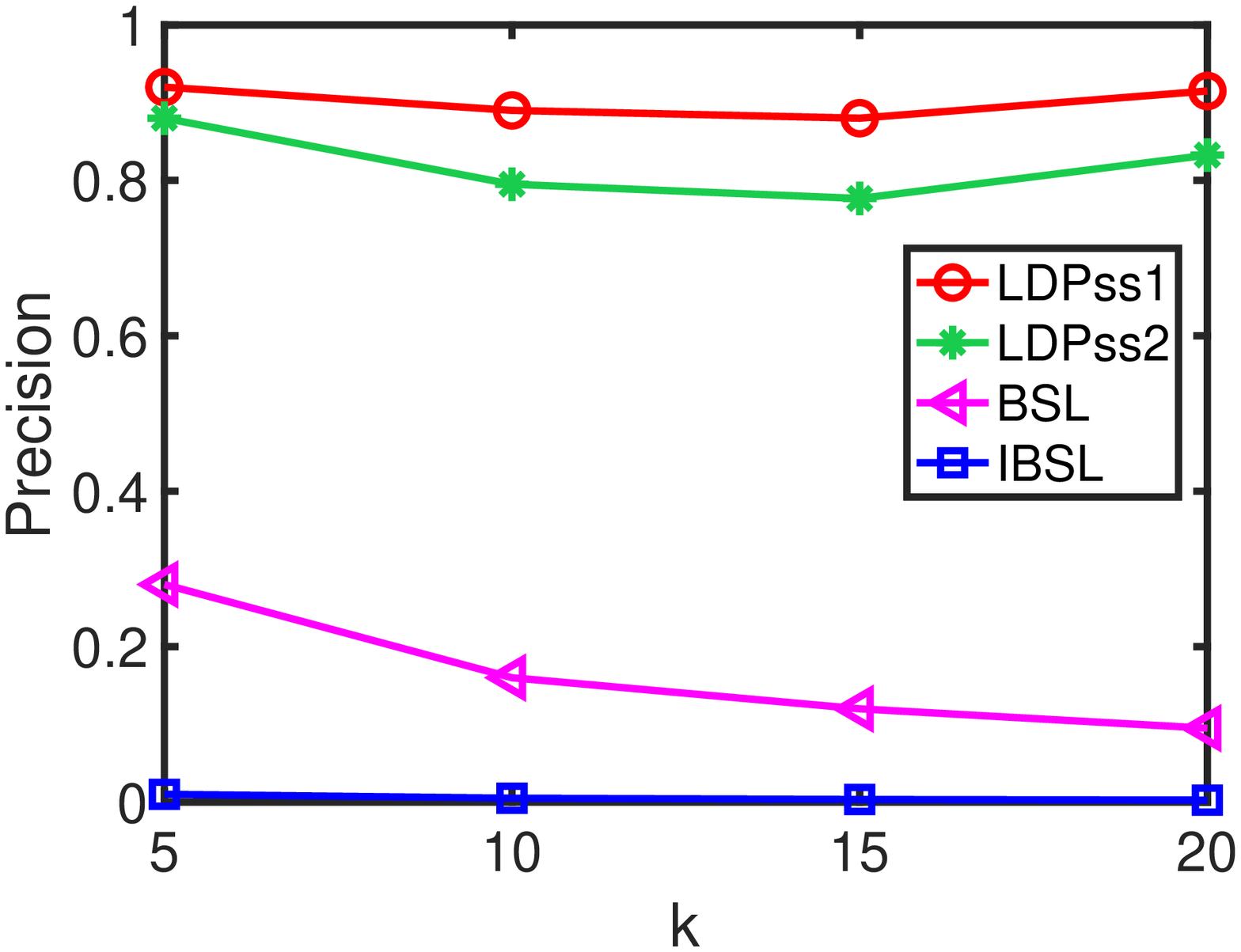}
}
\subfloat[Gowalla:~$\epsilon=0.5$ ]{
\label{Fig-knnmGres1}
\includegraphics[scale=0.22]{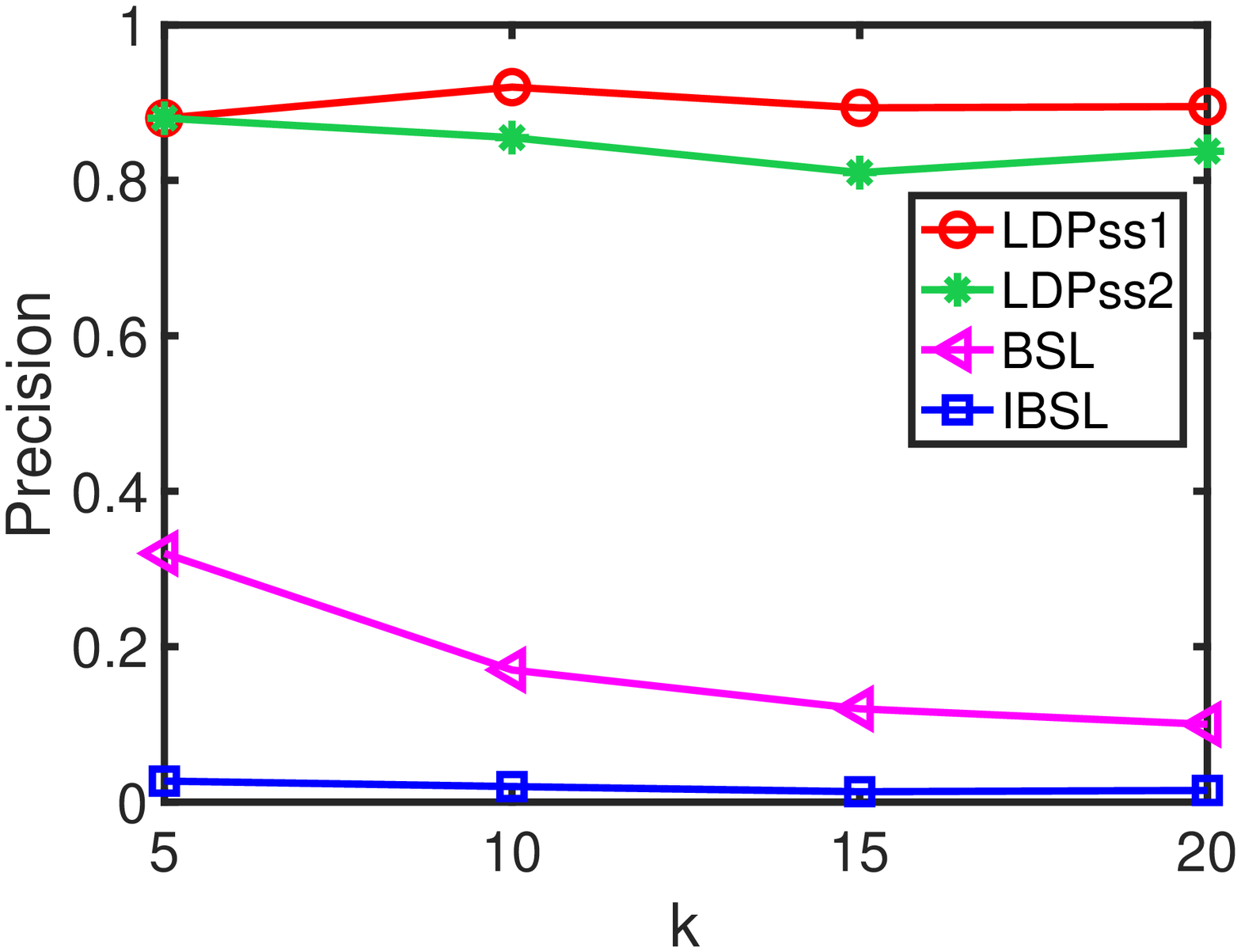}
}
\subfloat[Gowalla:~$\epsilon=1$ ]{
\label{Fig-knnmHres1}
\includegraphics[scale=0.22]{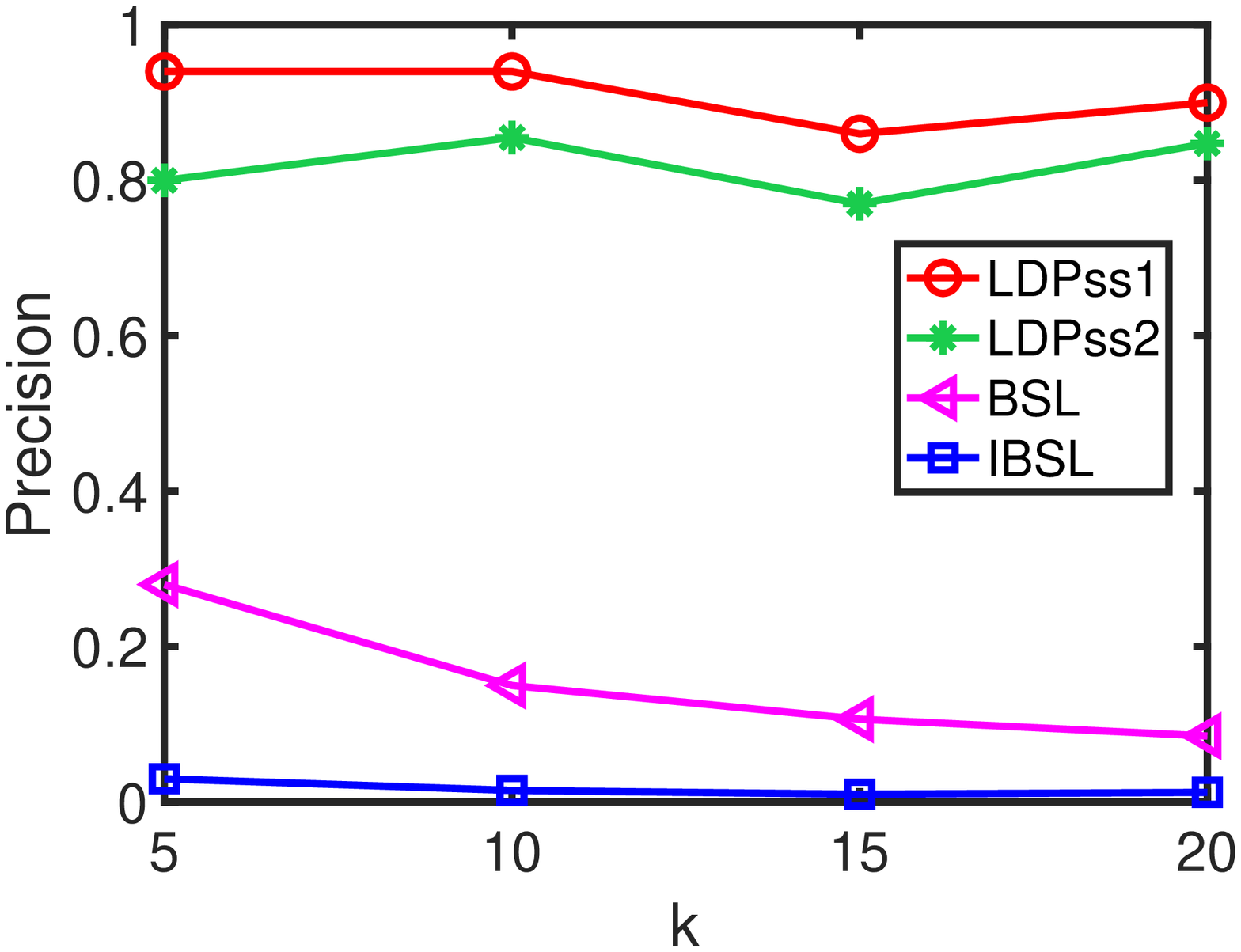}
}
\subfloat[Gowalla:~$\epsilon=2$ ]{
\label{Fig-knnmCres}
\includegraphics[scale=0.22]{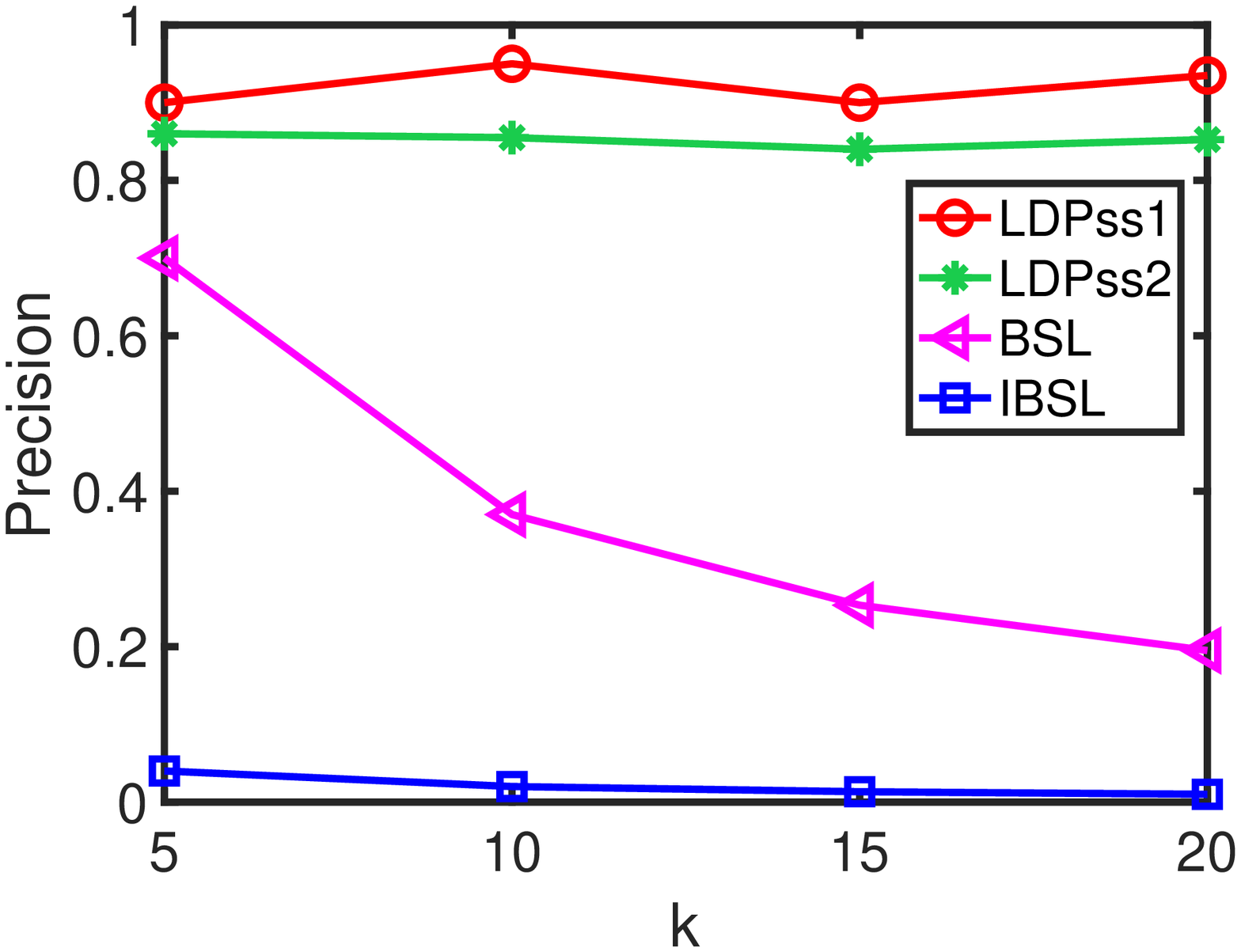}
}

\subfloat[BrightKite:~$\epsilon=0.1$ ]{
\label{Fig-B01}
\includegraphics[scale=0.22]{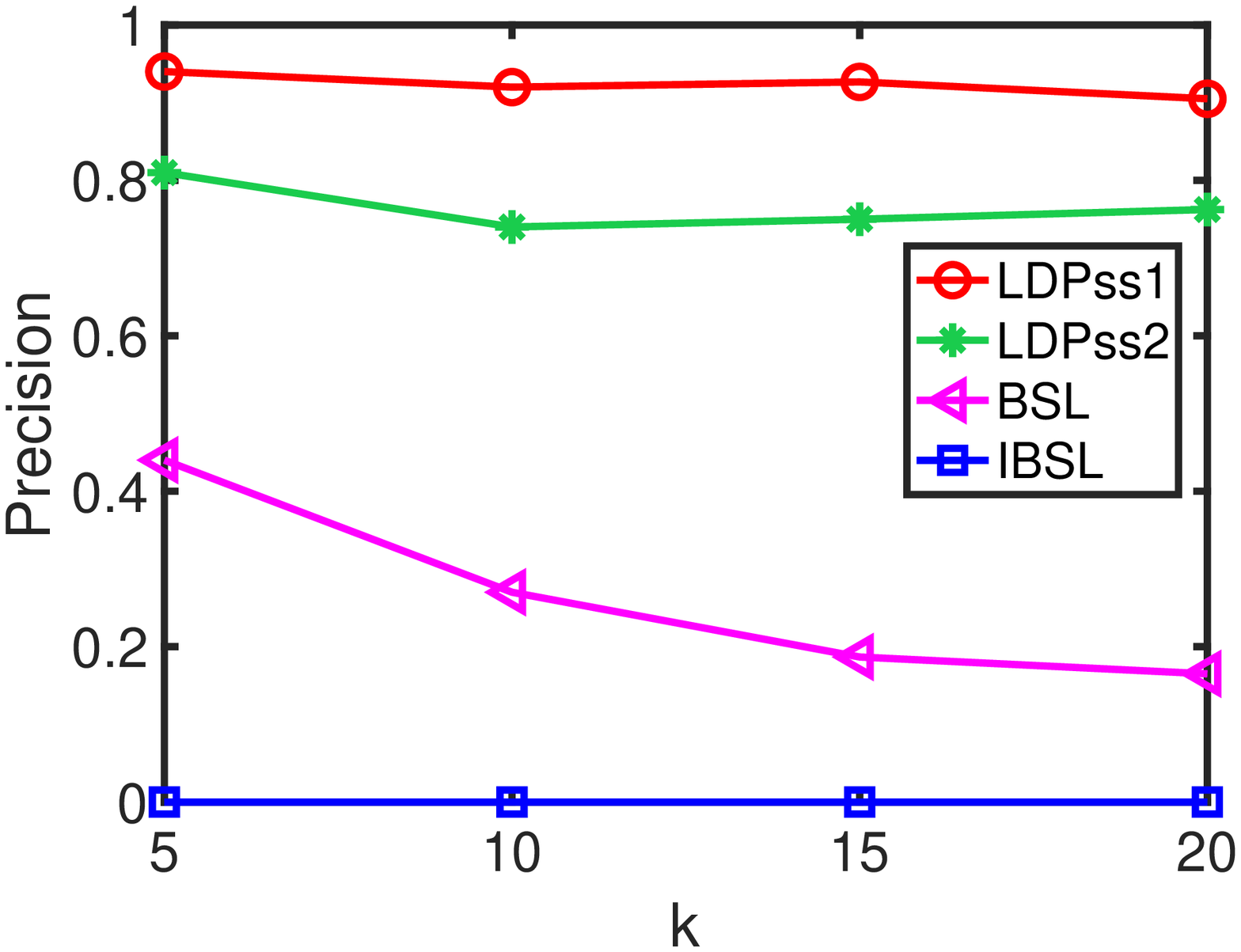}
}
\subfloat[BrightKite:~$\epsilon=0.5$]{
\label{Fig-knnmGres2}
\includegraphics[scale=0.22]{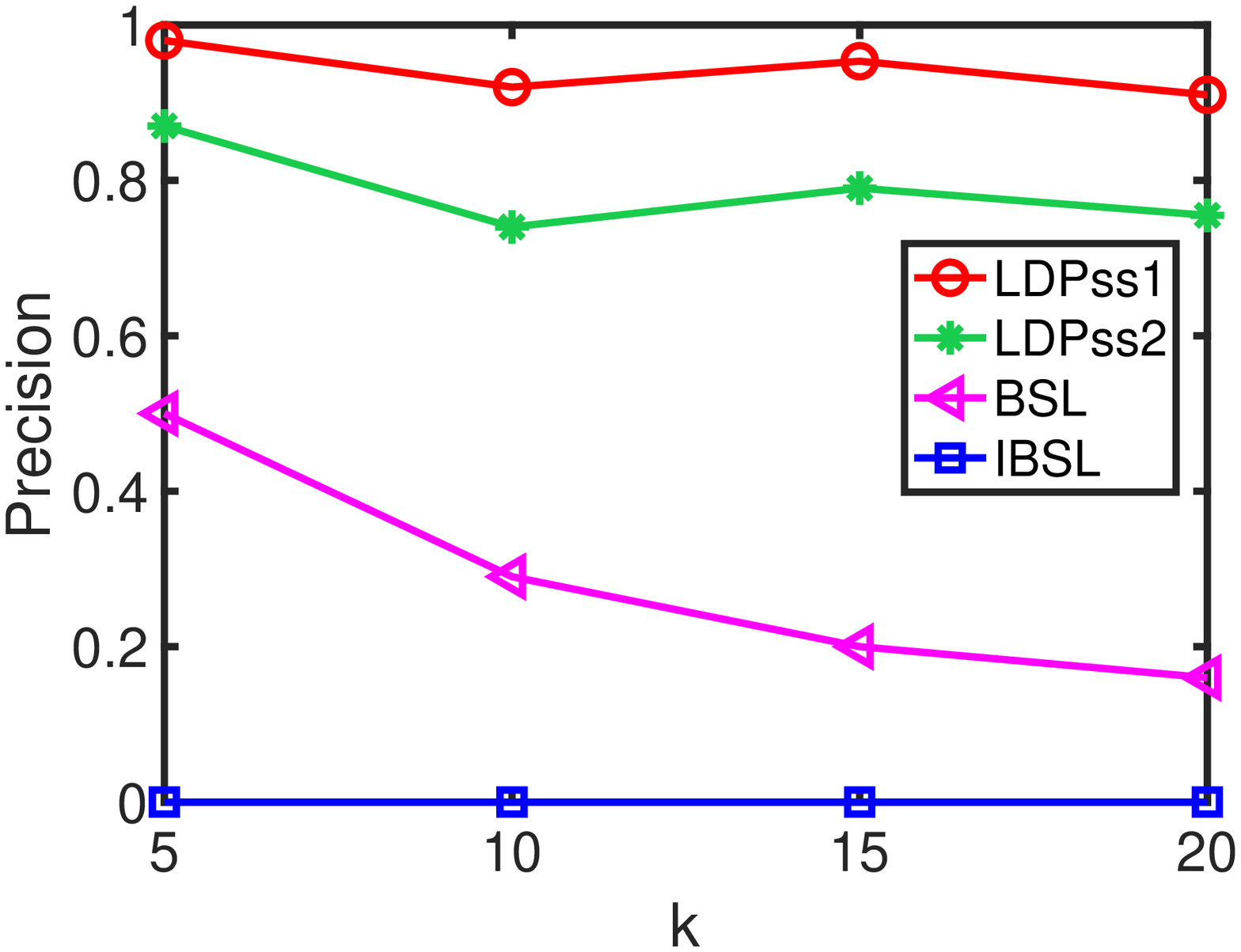}
}
\subfloat[BrightKite:~$\epsilon=1$]{
\label{Fig-knnmHres2}
\includegraphics[scale=0.22]{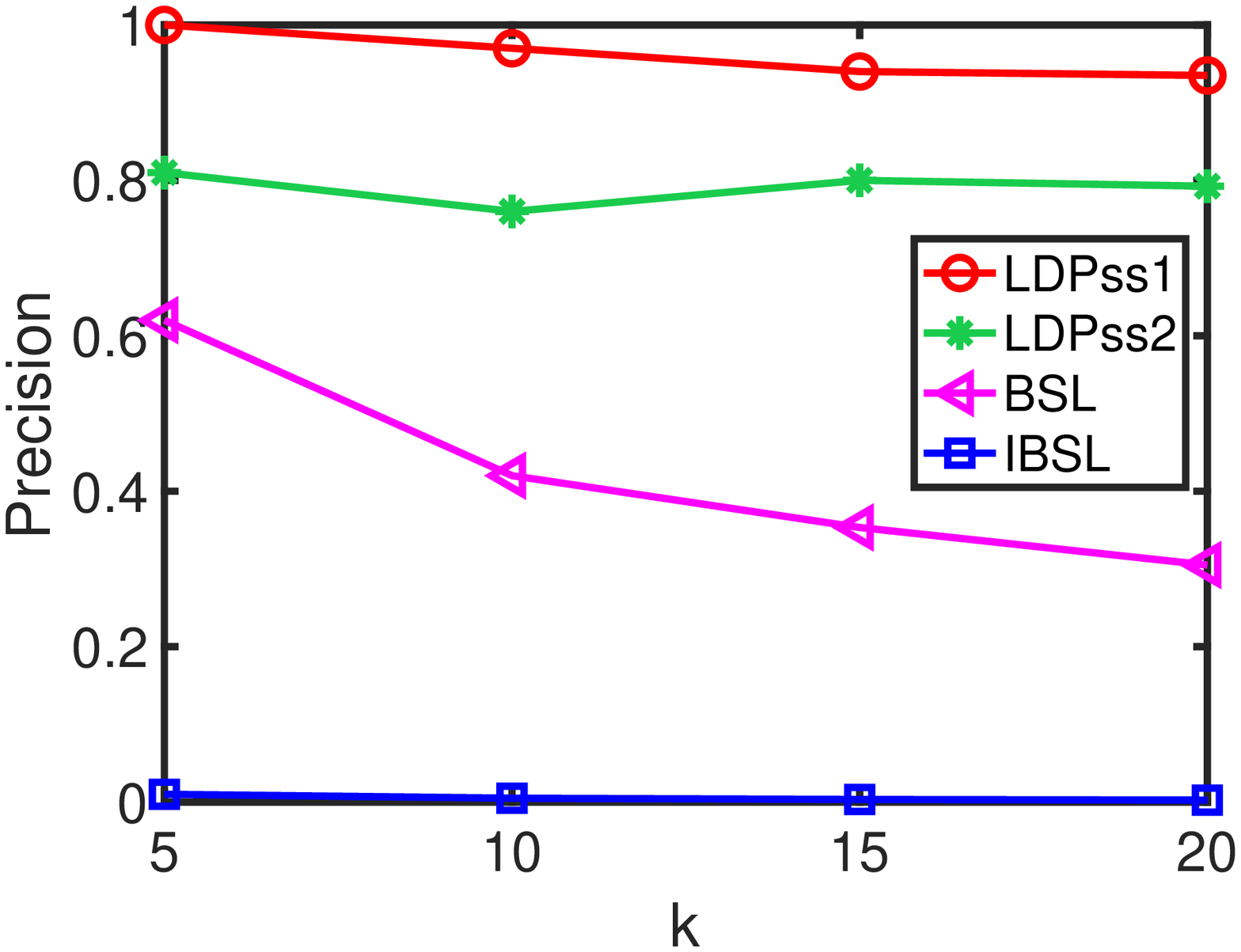}
}
\subfloat[BrightKite:~$\epsilon=2$]{
\label{Fig-knnmCres1}
\includegraphics[scale=0.22]{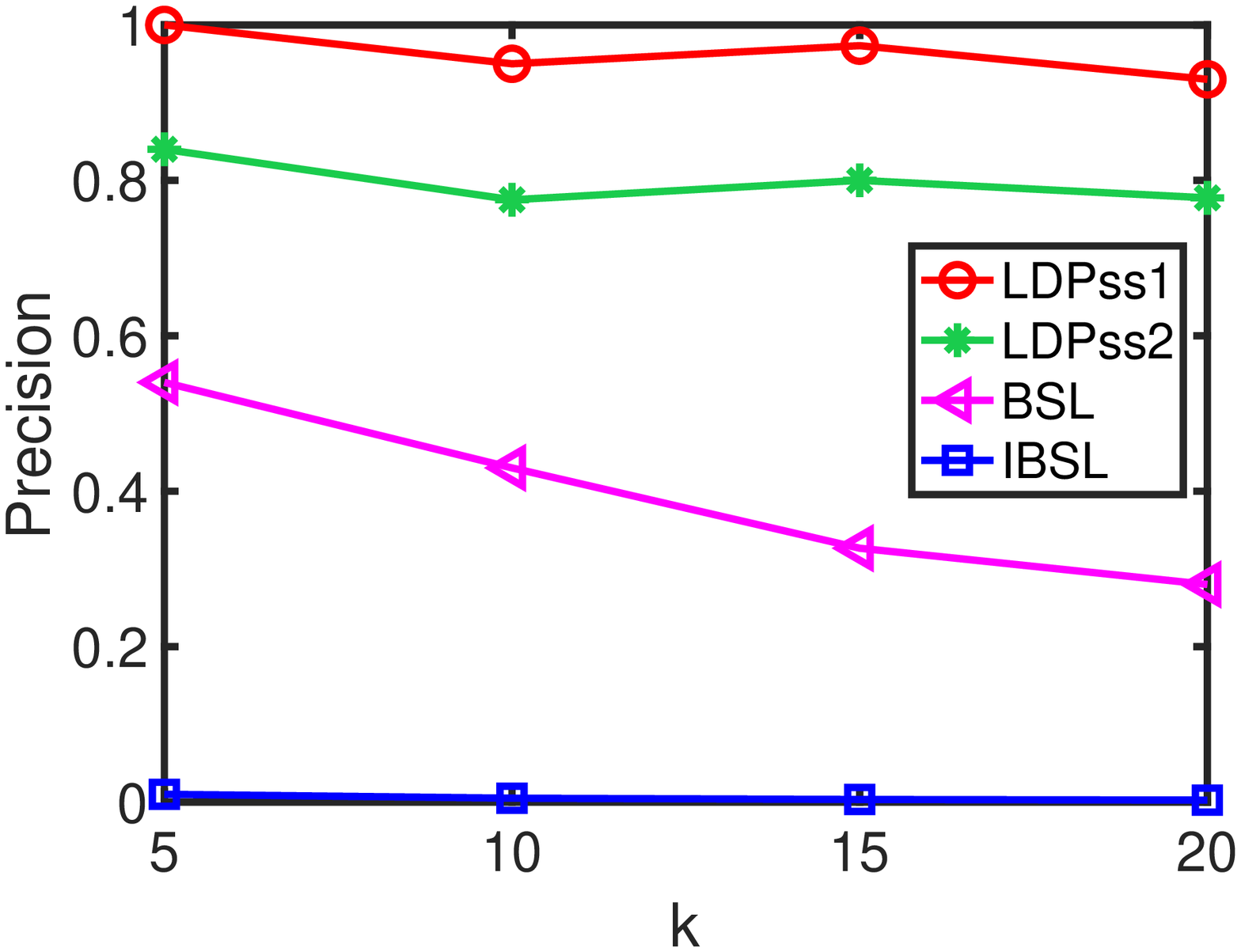}
}

\subfloat[Synthetic:~$\epsilon=0.1$]{
\label{Fig-zip01}
\includegraphics[scale=0.22]{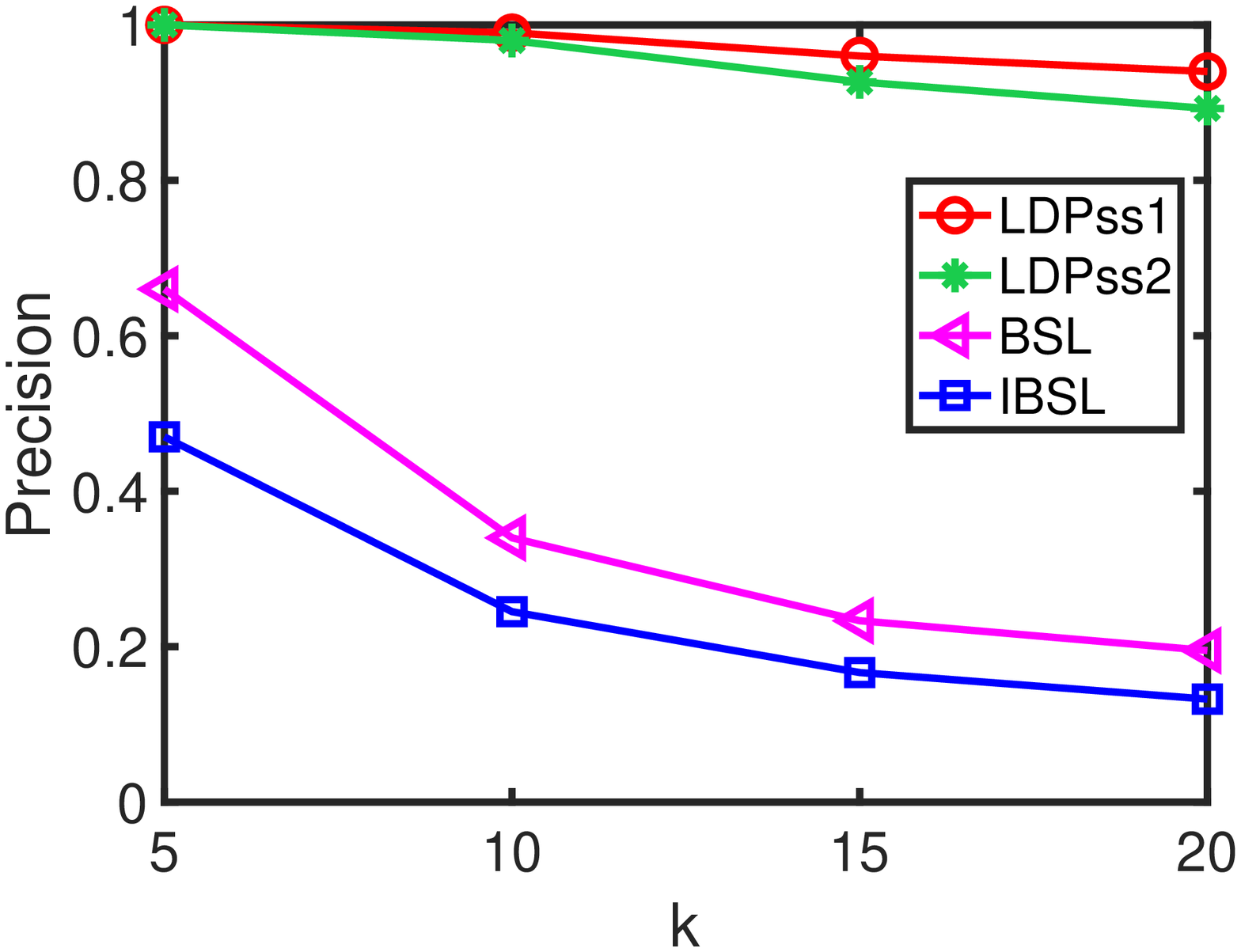}
}
\subfloat[Synthetic:~$\epsilon=0.5$]{
\label{Fig-knnmGres3}
\includegraphics[scale=0.22]{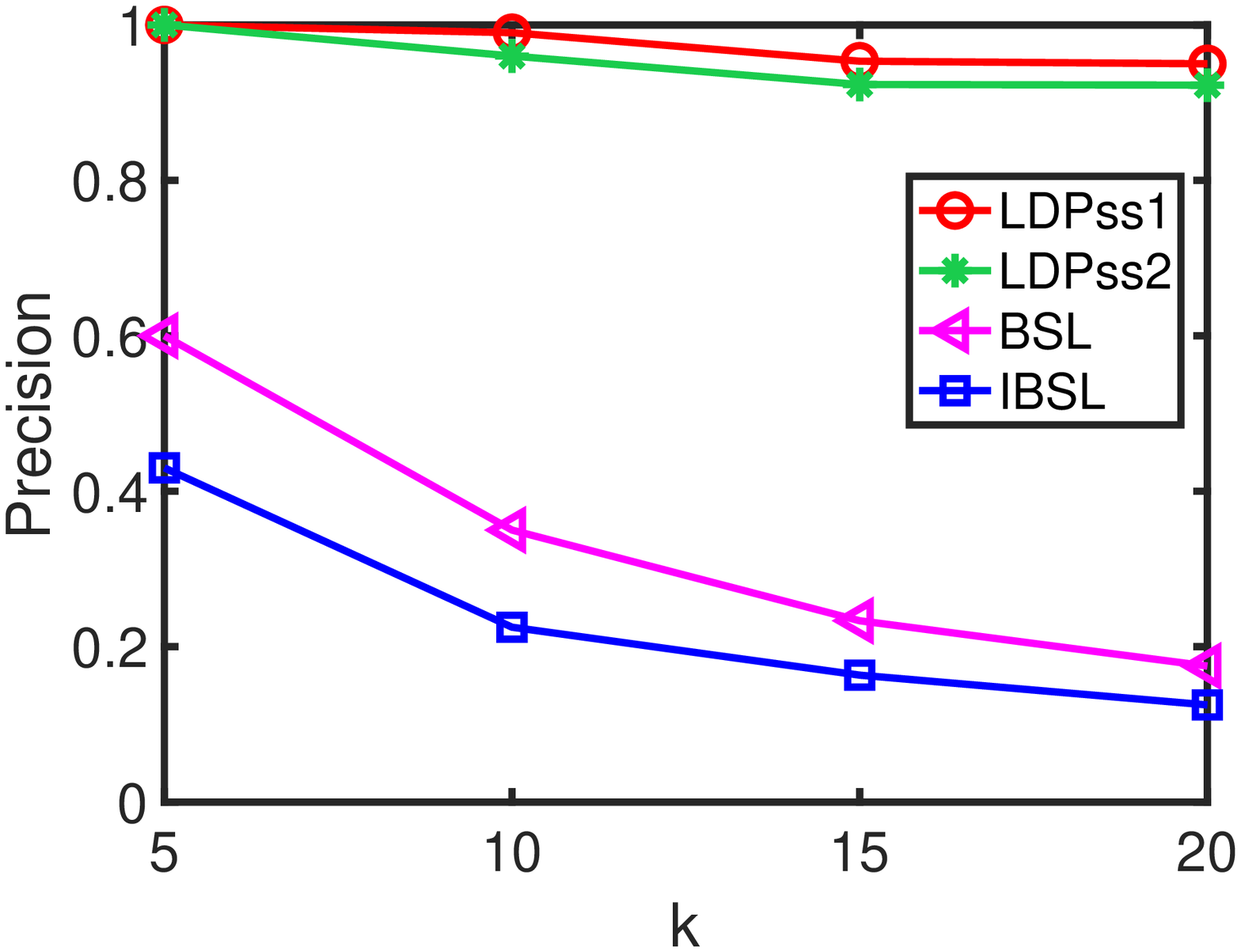}
}
\subfloat[Synthetic:~$\epsilon=1$]{
\label{Fig-knnmHres3}
\includegraphics[scale=0.22]{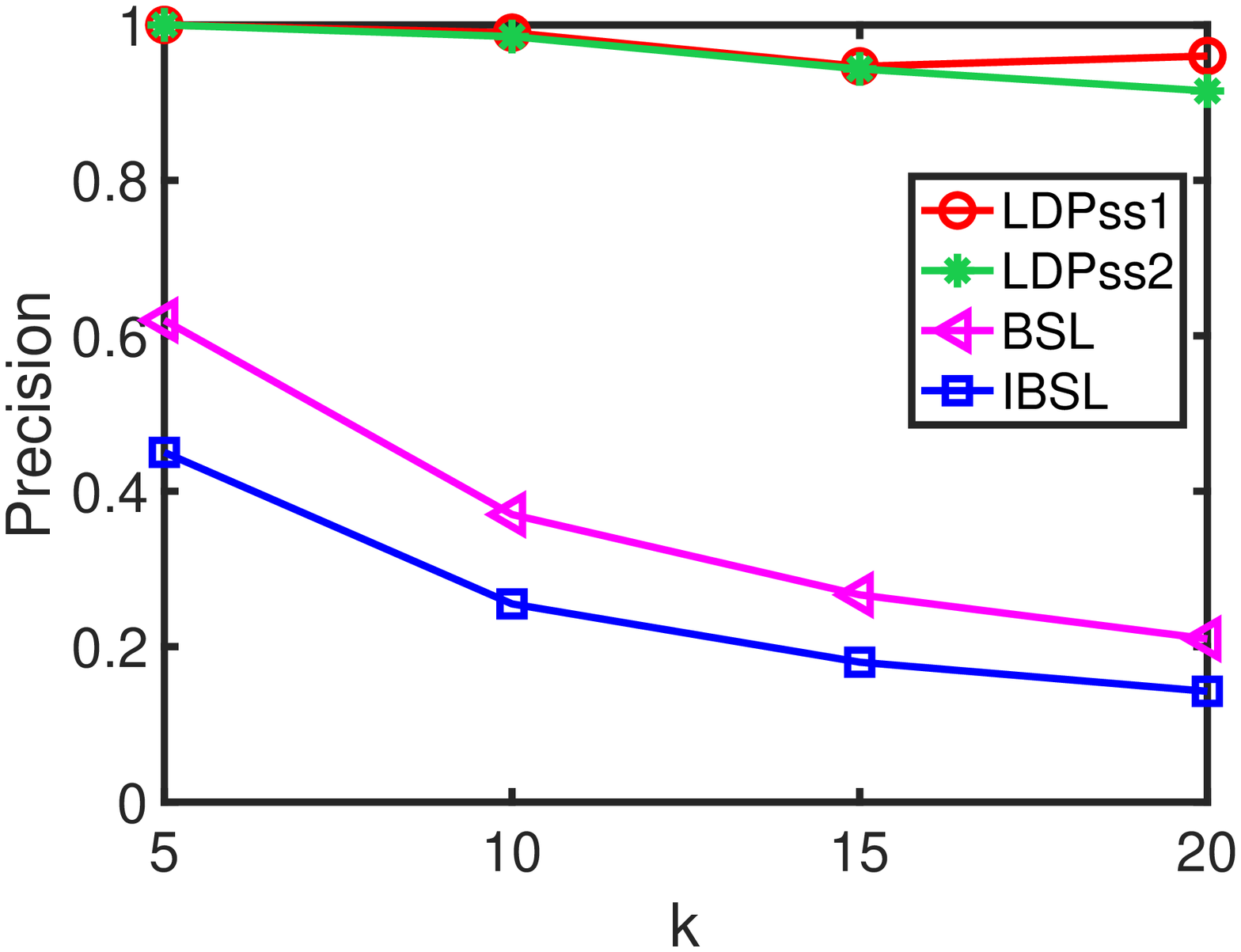}
}
\subfloat[Synthetic:~$\epsilon=2$]{
\label{Fig-knnmCres2}
\includegraphics[scale=0.22]{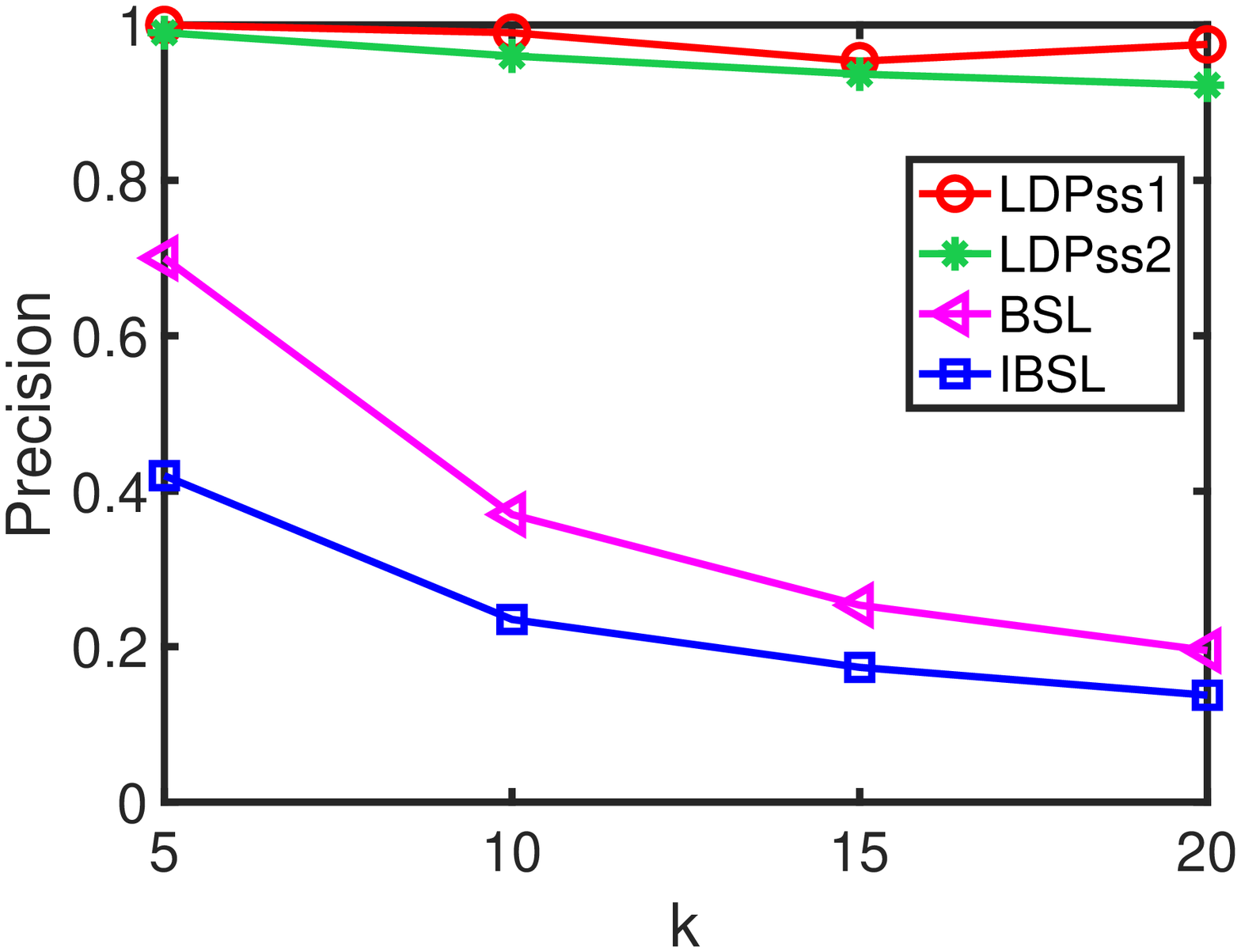}
}

\caption{Effect of $k$ on three datasets}
\label{FIG-k}
\end{figure*}

Fig. \ref{FIG-k} shows the precision values corresponding to different values of $k$ on Gowalla, BrightKite and synthetic datasets. 
It is clear that both LDPss1 and LDPss2 outperform BSL and IBSL in all configurations. 
For example, in Fig. \ref{Fig-Gowalla01} on Gowalla dataset, the precision of LDPss1 is around $0.9$, which outperforms BSL by around $60\%$ when $k=5$, and outperforms IBSL by almost $90\%$. LDPss2 has a similar result. 
Also, we find that the IBSL has a very poor performance, especially on Gowalla and BrightKite datasets. 
This is because only around $200$ workers participate in the protocol in each iteration. 
Though each user is assigned a more privacy budget compared to BSL, 
it is not enough to make much difference when the data dimension is high. 
The performance of IBSL is much better on Synthetic dataset because more workers participate which reduces the statistical error. 

In addition, the number of queried frequent path has little effect on the precision of the proposed methods while the precision of the other two methods reduce significantly with the increase of $k$. 
This is caused by the fact that in BSL and IBSL, as $k$ increases, the frequency distribution between locations become closer to each other and hence harder to distinguish between one and another.
This phenomenon can be verified on the synthetic dataset, which has an obvious frequency difference when $k<5$. 
The precision decreases dramatically to around $20\%$ for both BSL and IBSL, but the precision of the proposed method still achieves precision over $90\%$. A similar result is also observed when $\epsilon$ has different values in Fig. \ref{FIG-k}.

\subsection{Performance with different privacy budgets}
We examined the performance of the three methods in relation to the different privacy
budgets $\epsilon$ on three datasets for different values of $\epsilon\in\{0.1,0.5,1,2\}$. 
We query top-10 frequent paths. 

\begin{figure*}[tbp]
\centering

\subfloat[Gowalla]{
\label{Fig-epsilonG}
\includegraphics[scale=0.22]{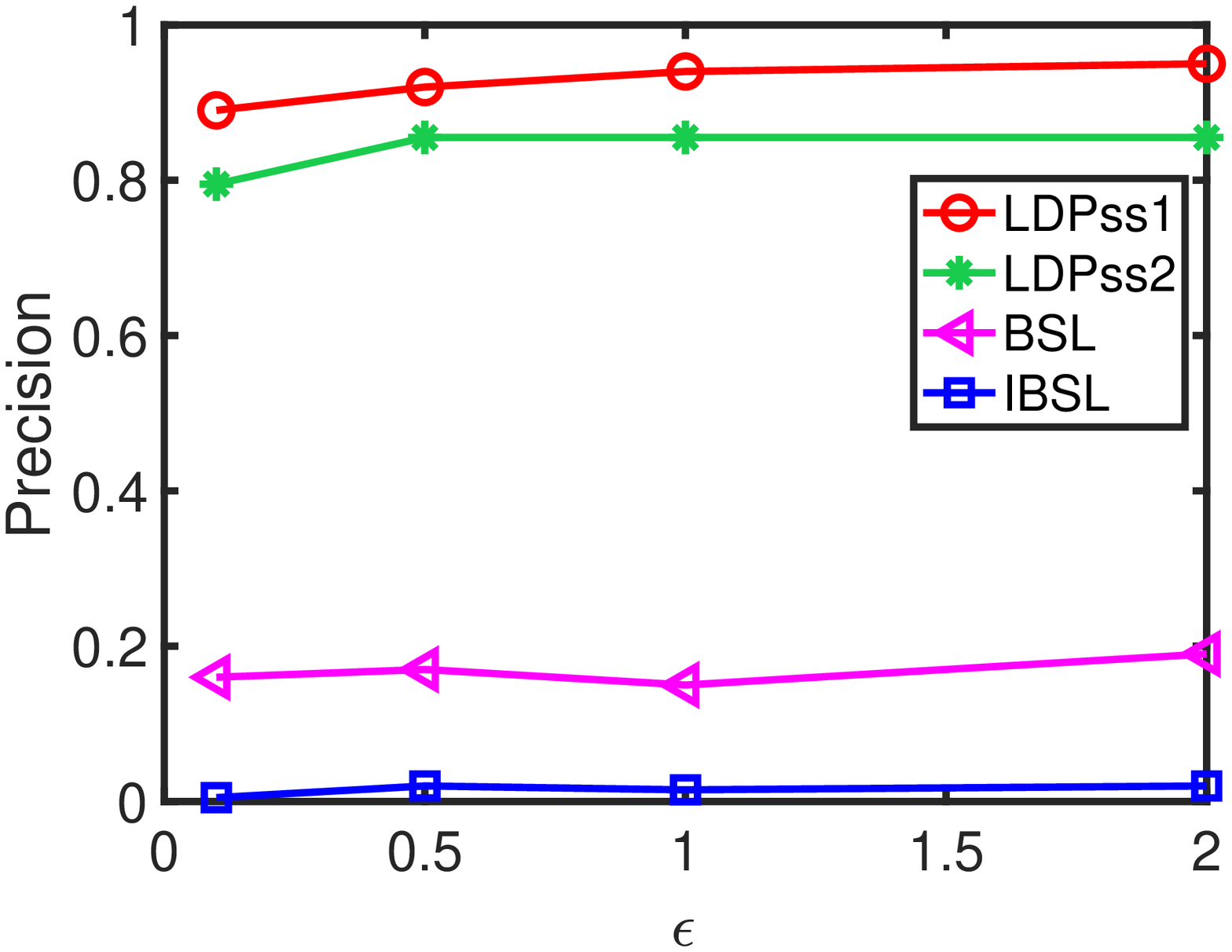}
}
\subfloat[BrigthKite]{
\label{Fig-knnmGres4}
\includegraphics[scale=0.22]{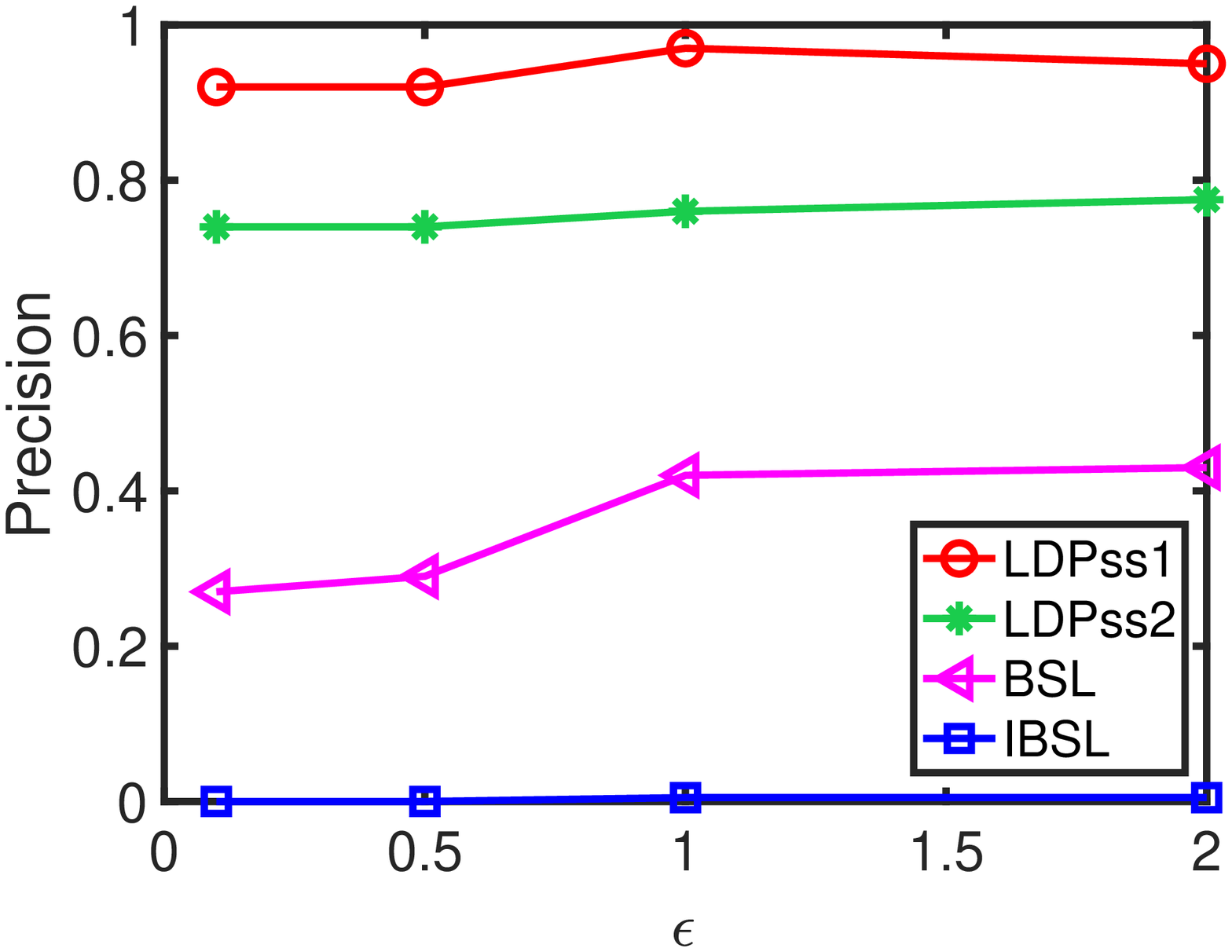}
}
\subfloat[Synthetic]{
\label{Fig-knnmHres4}
\includegraphics[scale=0.22]{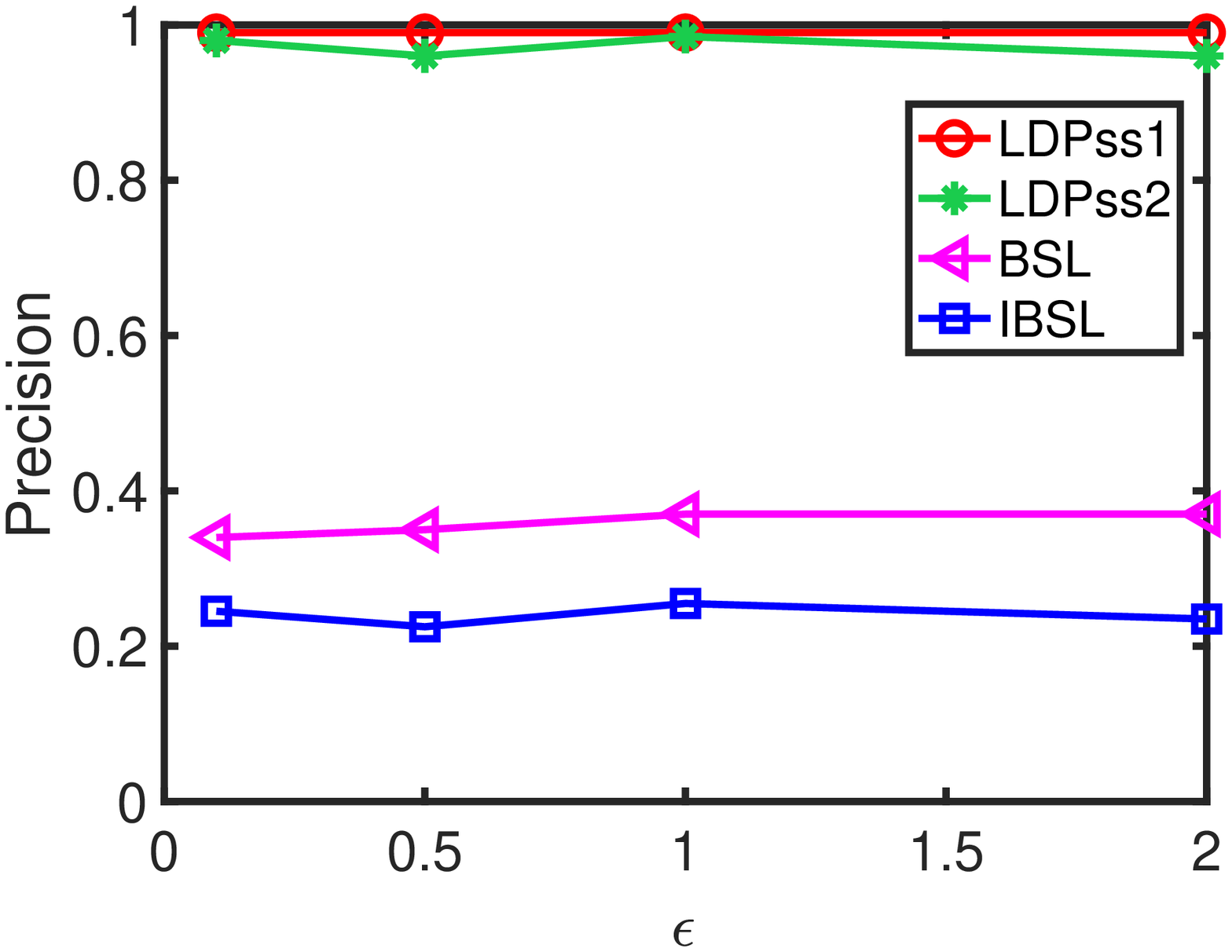}
}
\caption{Effect of $\epsilon$ on three datasets}
\label{FIG-epsilon}
\end{figure*}

Fig. \ref{FIG-epsilon} shows the change in precision with a varied privacy budget.
We observed that the proposed methods always outperform the other two methods and the precision increased as the privacy budget $\epsilon$ increased for the proposed methods on all the three datasets. This is because a bigger privacy budget $\epsilon$ means a higher proportion of the workers are chosen to report their true value for the proposed methods. 
Specifically, as shown in Fig. \ref{Fig-epsilonG}, when $\epsilon=0.1$, LDPss1 achieves precision at $0.88$ and LDPss2 achieves $0.79$ respectively. 
When $\epsilon=2$, the precision of LDPss1 and LDPss2 achieves $0.93$ and $0.85$ respectively. 
In addition, we find that the increase of the privacy budget does not have a big improvement for both BSL and IBSL. There are two reasons. First, for the traditional local differential privacy, a large number of participants are needed to reduce the error. However, in the experiment, we only have around one thousand records. Second, the dimension of the data is very high in the proposed application. 
For $i$th iteration, the node can only get $\frac{\epsilon}{d_i|D|}$ privacy budget. 
Due to this very small multiplier, small increase of $\epsilon$ does not lead to large difference in the performance. 
A similar result can be found on other two datasets.  

\subsection{Performance with different number of participants}

To examine the effect of the number of participants, 
we randomly sample $1000$, $3000$ and $5000$ records from the synthetic dataset and test the query precision on these three datasets with different privacy budgets. We still set $k=10$.

\begin{figure*}[!tp]
\centering

\subfloat[$\epsilon=0.1$]{
\label{Fig-numberP01}
\includegraphics[scale=0.22]{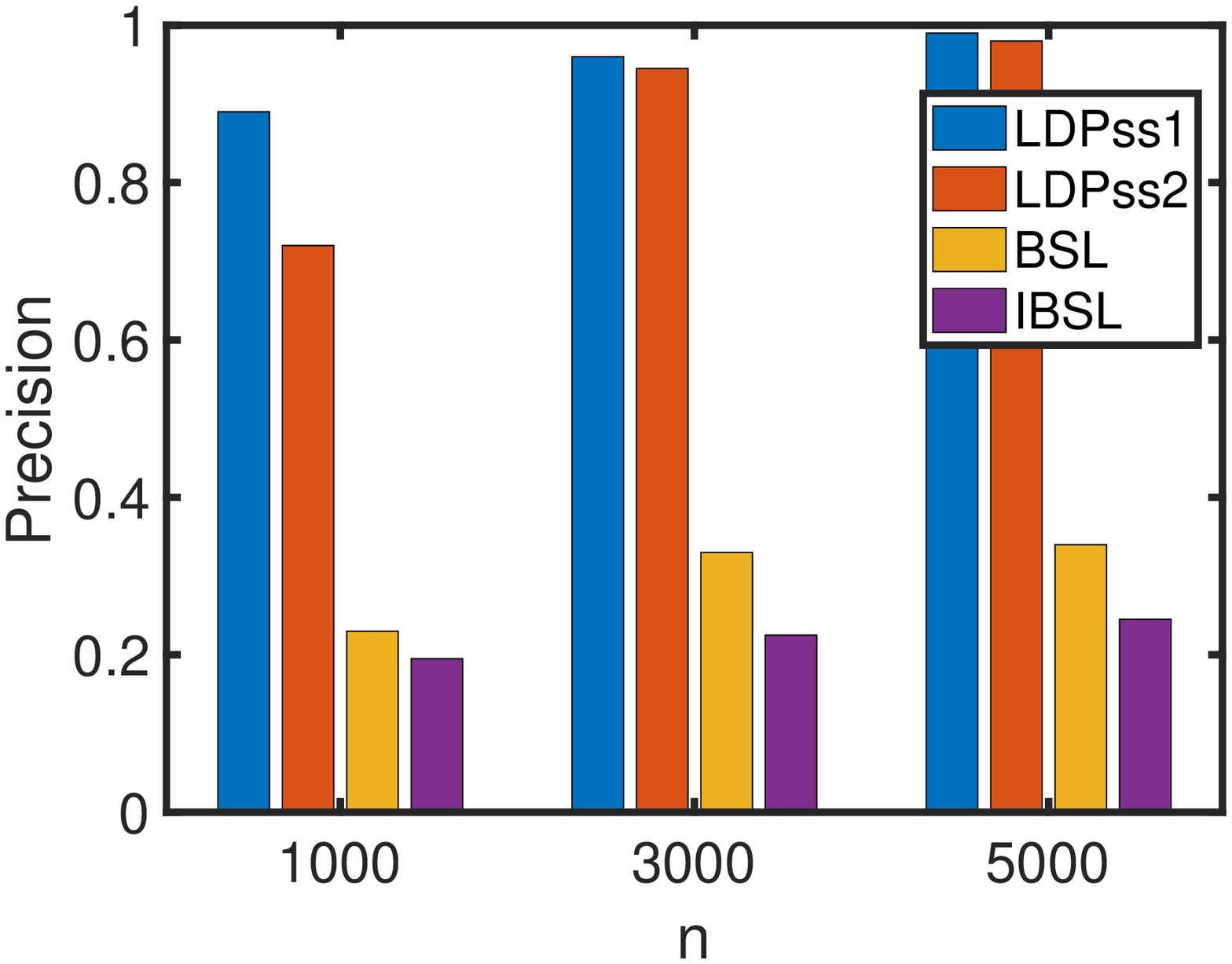}
}
\subfloat[$\epsilon=0.5$]{
\label{Fig-knnmGres5}
\includegraphics[scale=0.22]{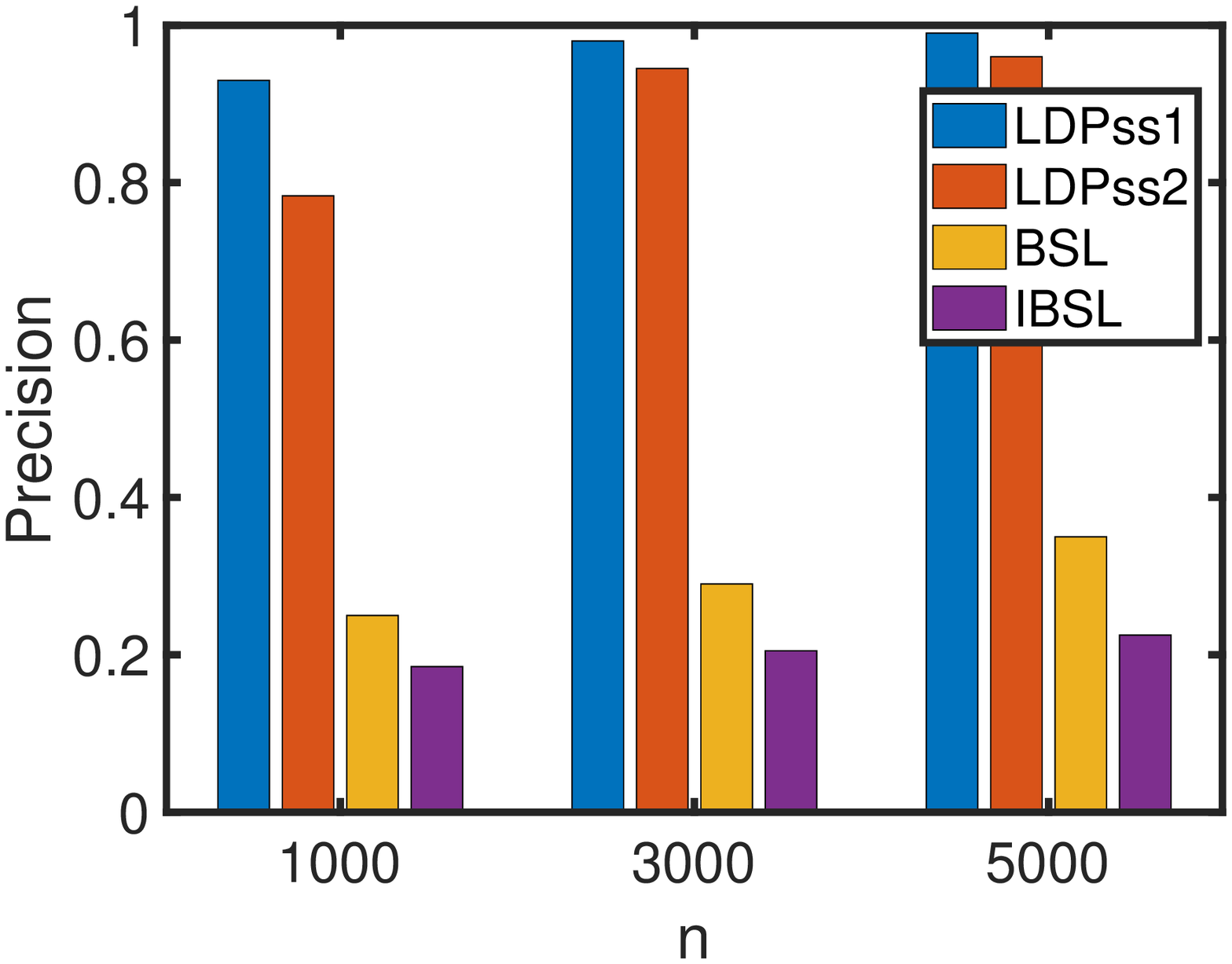}
}
\subfloat[$\epsilon=1$]{
\label{Fig-knnmHres5}
\includegraphics[scale=0.22]{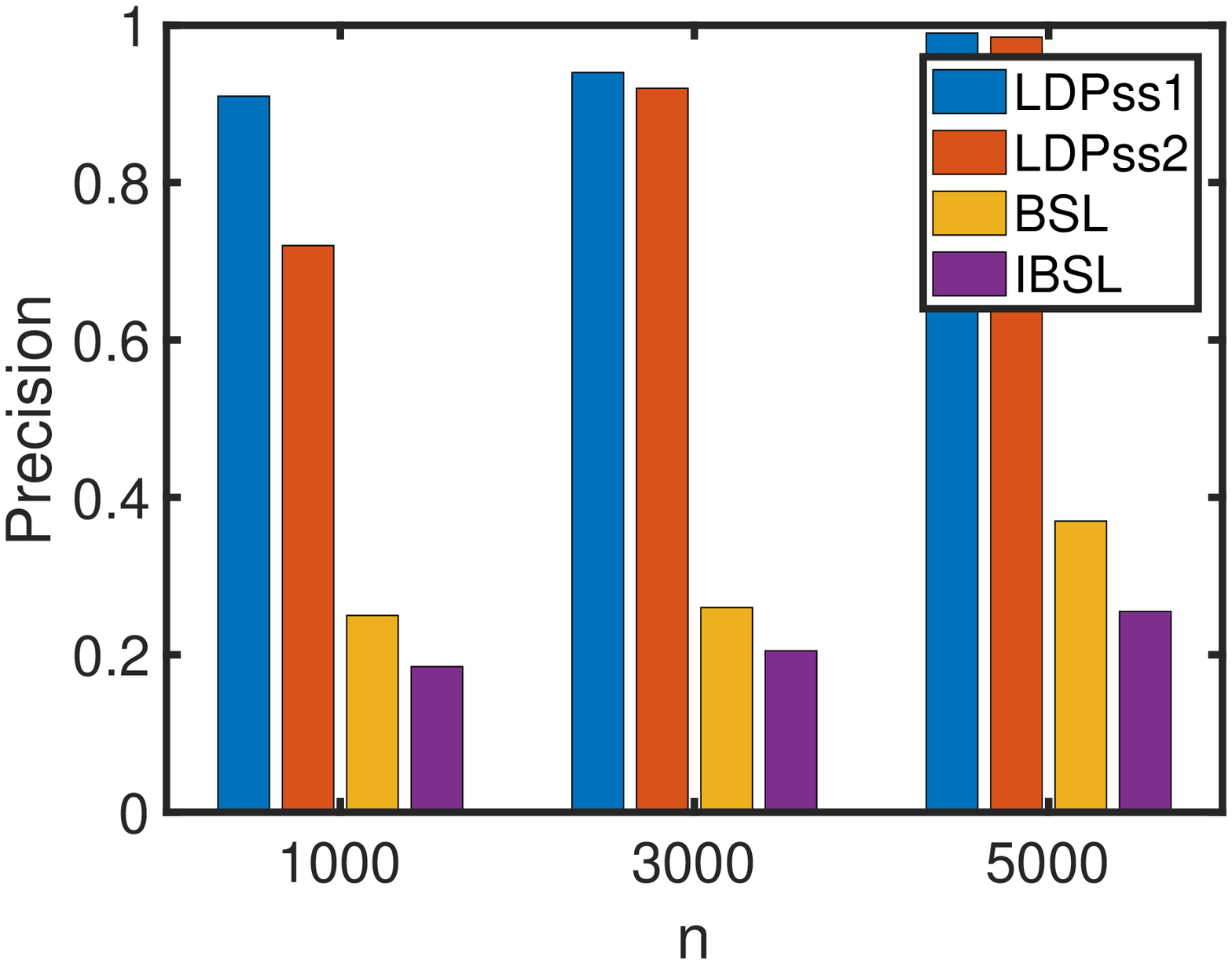}
}
\subfloat[$\epsilon=2$]{
\label{Fig-numberP20}
\includegraphics[scale=0.22]{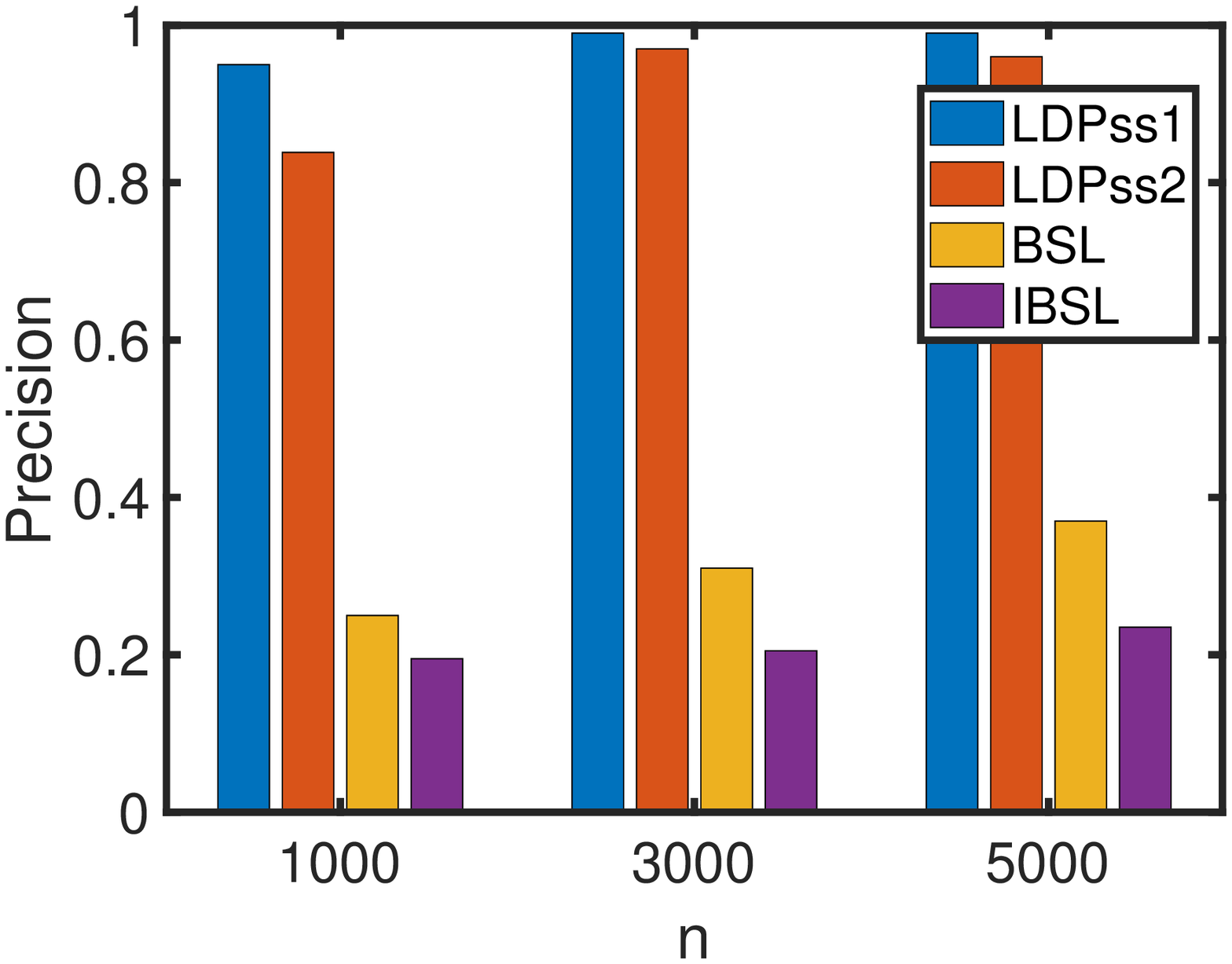}
}
\caption{Effect of the number of participants on the synthetic dataset}
\label{FIG-numberP}
\end{figure*}

\begin{figure*}[tbp]
\centering

\subfloat[$\epsilon=0.1$]{
\label{Fig-locationG}
\includegraphics[scale=0.22]{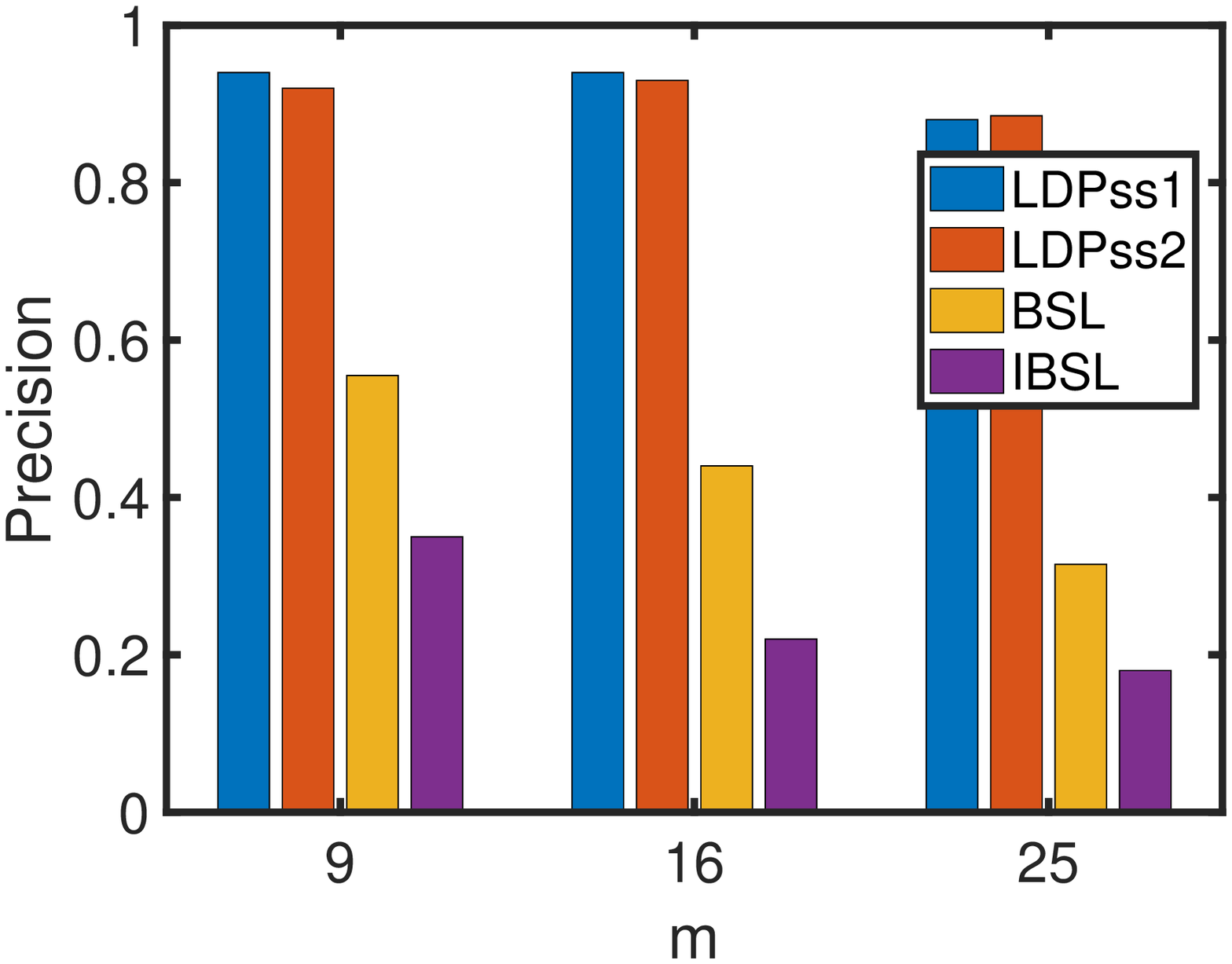}
}
\subfloat[$\epsilon=0.5$]{
\label{Fig-locationB}
\includegraphics[scale=0.22]{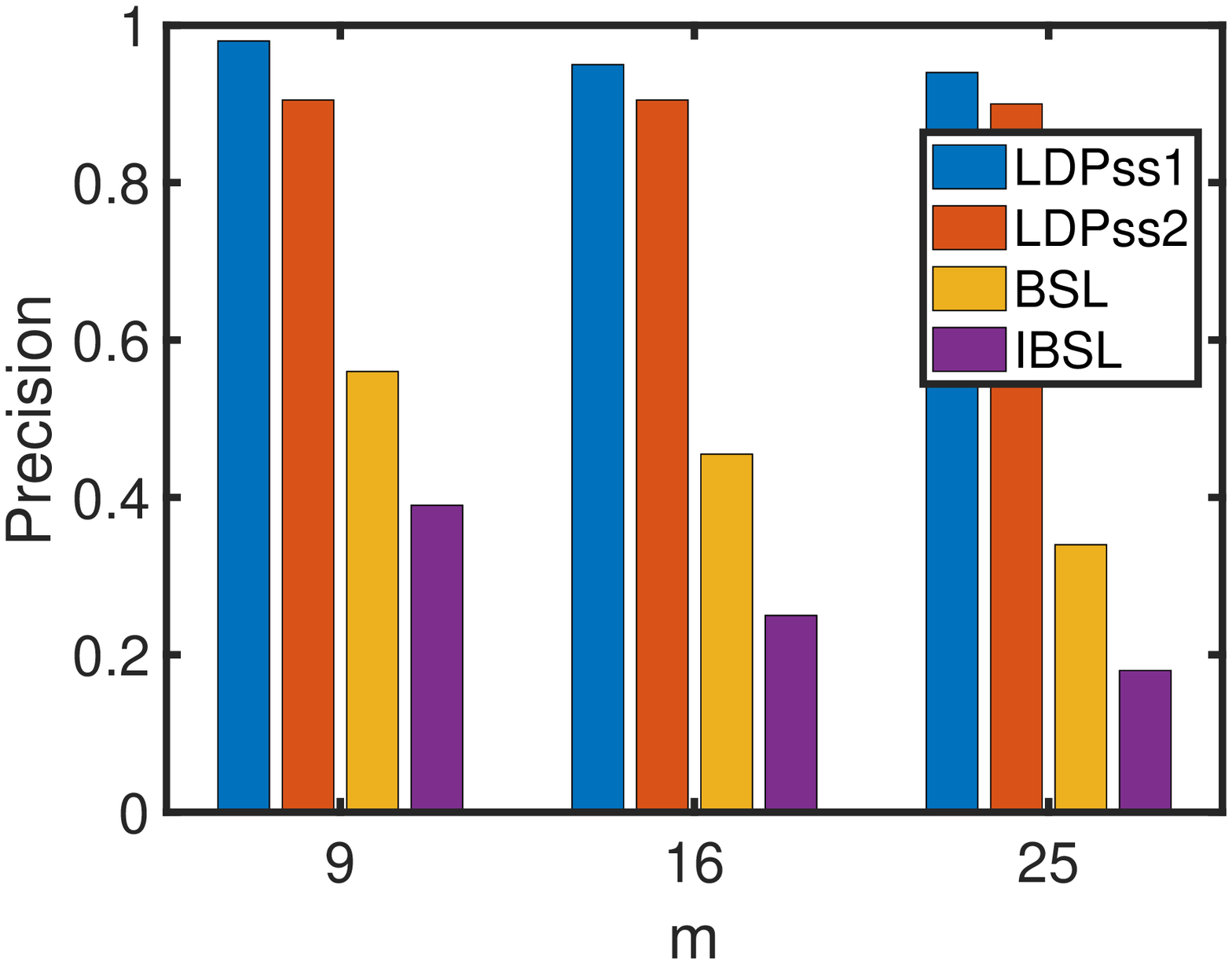}
}
\subfloat[$\epsilon=1$]{
\label{Fig-locationG3}
\includegraphics[scale=0.22]{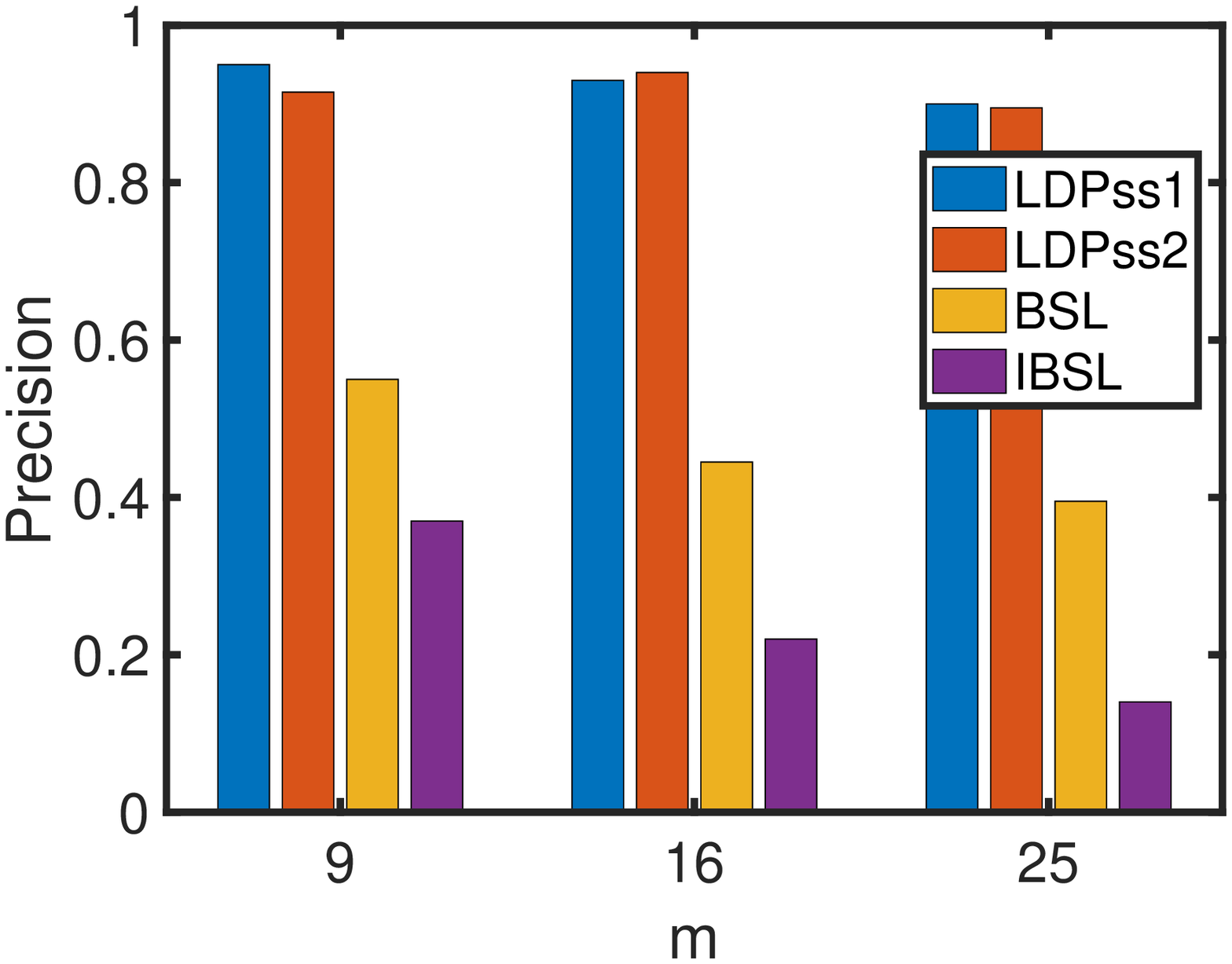}
}
\subfloat[$\epsilon=2$]{
\label{Fig-locationB1}
\includegraphics[scale=0.22]{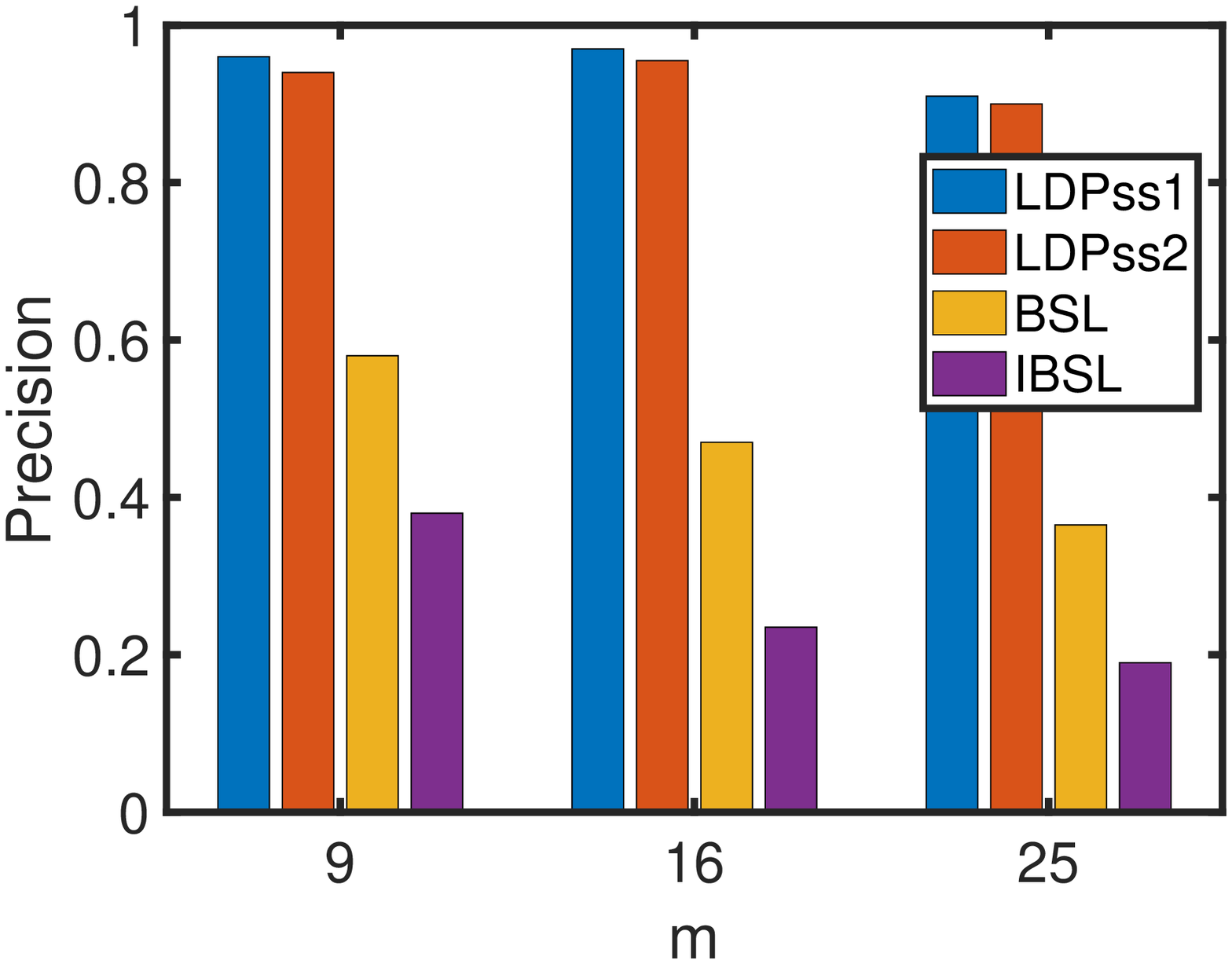}
}

\caption{Performance with different number of locations}
\label{FIG-dimension}
\end{figure*}

\begin{figure*}[htbp]
\centering
\subfloat[Gowalla]{
\label{Fig-alphaG}
\includegraphics[scale=0.22]{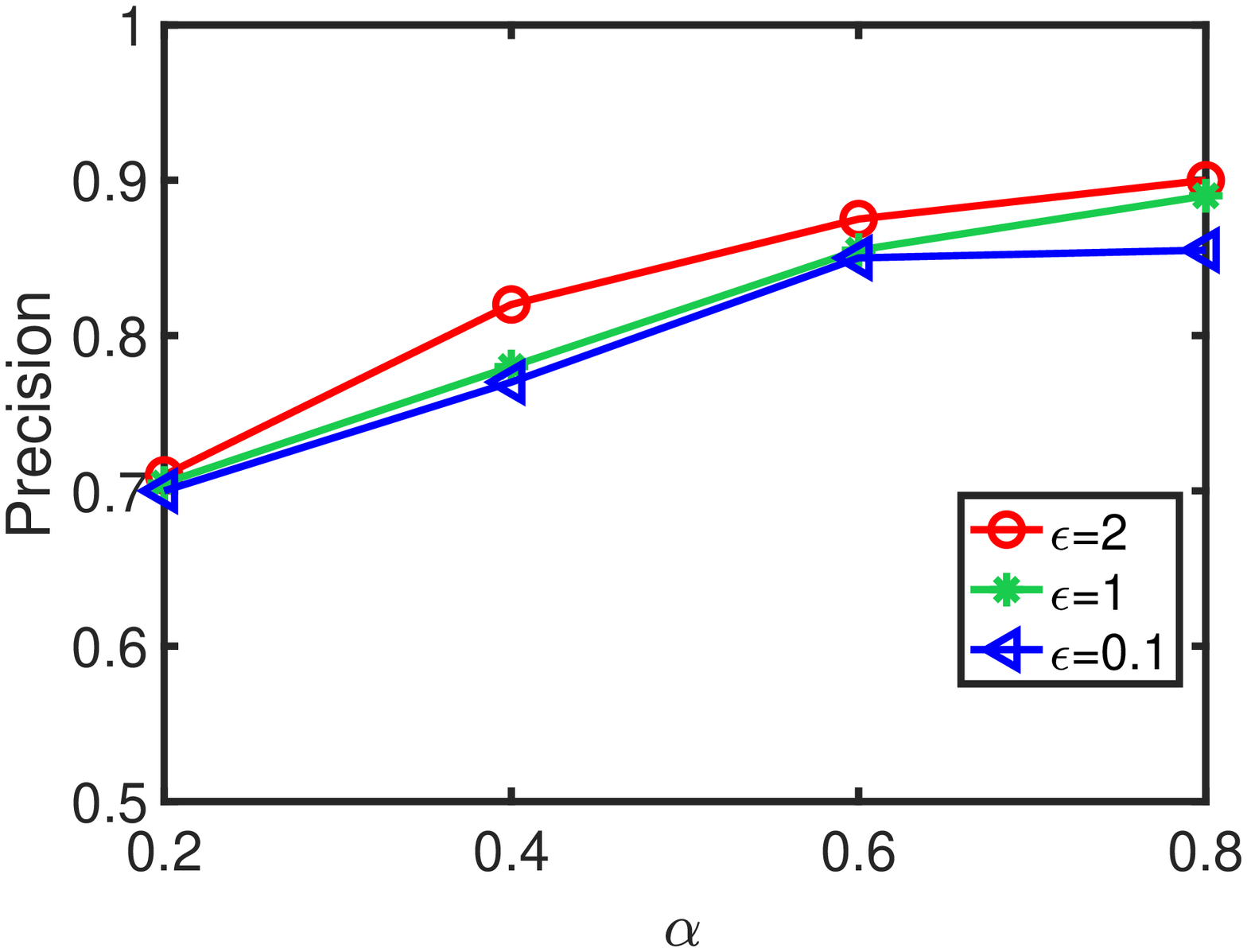}
}
\subfloat[BrigthKite]{
\label{Fig-knnmGres6}
\includegraphics[scale=0.22]{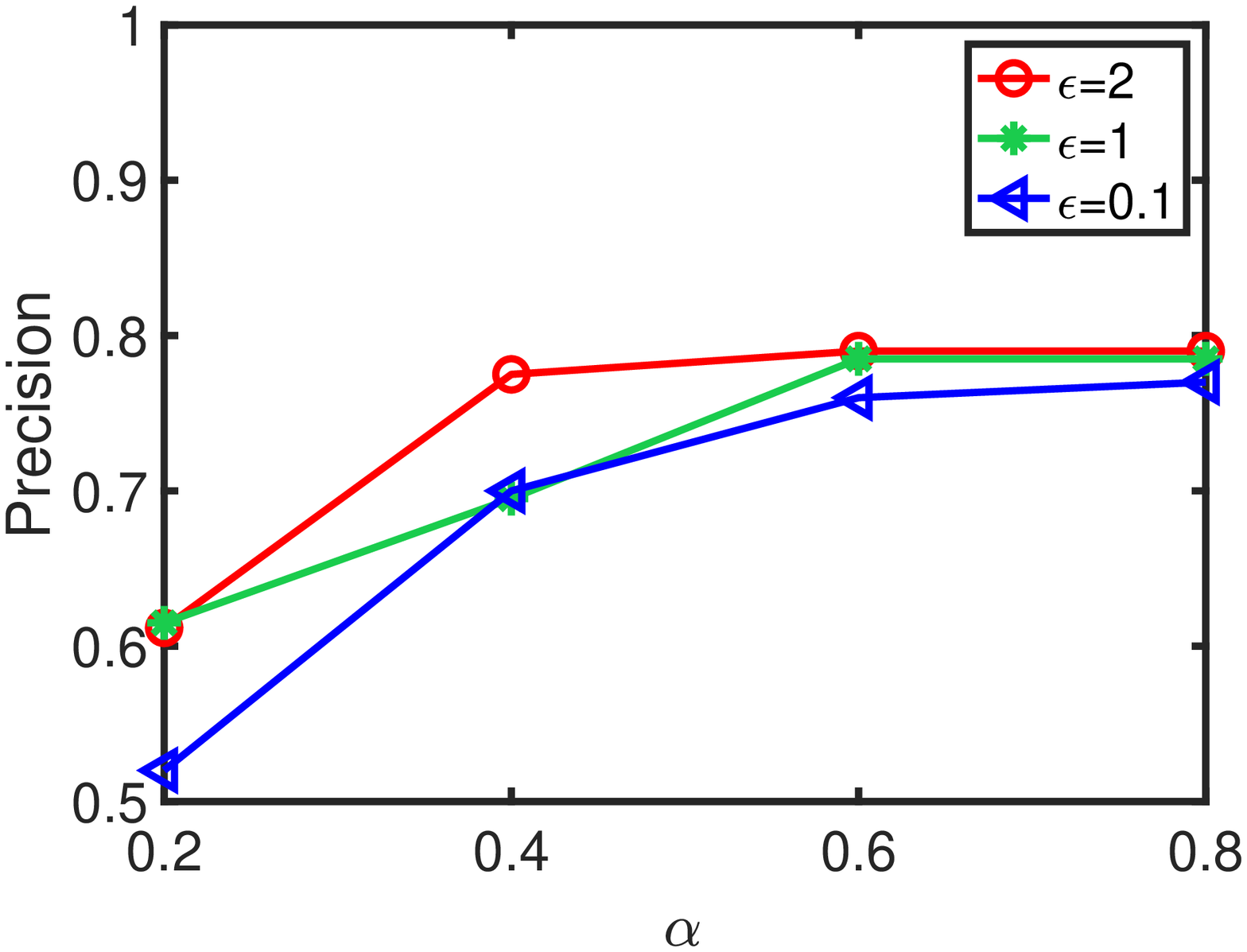}
}
\subfloat[Synthetic]{
\label{Fig-knnmHres6}
\includegraphics[scale=0.22]{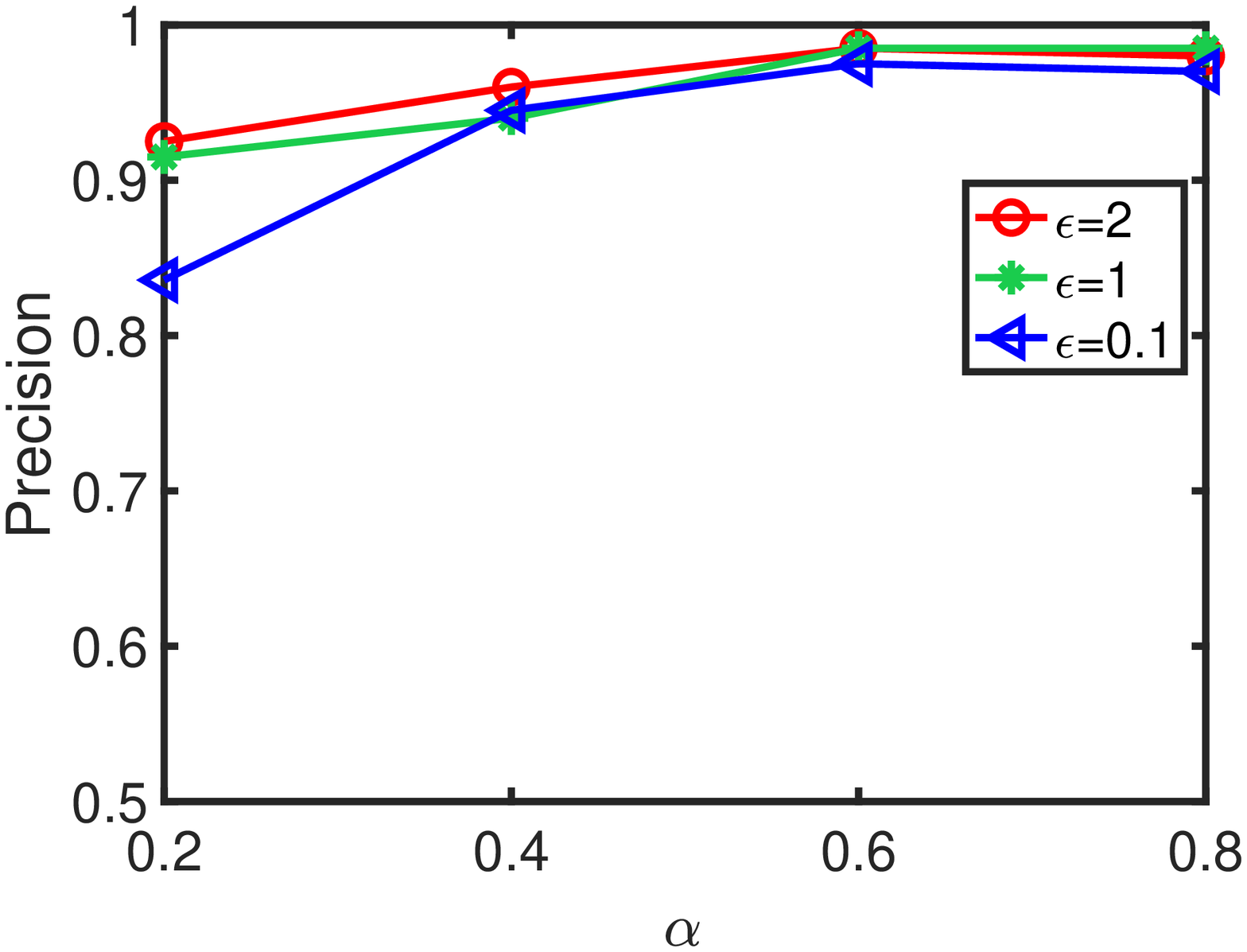}
}
\caption{Effect of $\alpha $ on three datasets}
\label{alpa}
\end{figure*}

Fig. \ref{FIG-numberP} shows the experimental results. 
It is observed that the precision of the proposed methods increase with the increase of the number of participants. 
Specifically, as shown in Fig. \ref{Fig-numberP01}, when $1000$ workers are queried, the
precision achieved by LDPss1 is $0.88$, when the number of workers is increased to $5000$, LDPss1 achieves the precision of $0.98$, which sees an increase of $10\%$. 
LDPss2 achieves a precision at $0.72$ with $3000$ workers and achieves $0.96$ with $5000$ participants.  
The BSL and IBSL follow the same trend as the proposed methods. 
As we can see, the BSL achieves $0.26$ when the scheme involves $1000$ workers with privacy budget $\epsilon=2$ in Fig. \ref{Fig-numberP20}, and the precision achieves $0.32$ and $0.36$ on $3000$ and $5000$ workers respectively. The result supports our expectation, as the difference between number of workers in different locations are bigger with a bigger number of participants, which causes the probability that minimum number of the first $k$ paths smaller than maximum value of the rest to be small for the proposed method and makes the error bound much smaller for both BSL and IBSL.

\subsection{Impact of the number of the locations}

The number of the location indicates the dimension of the dataset. 
We partition the map into $3 \times 3$, $4\times 4$ and $5\times 5$ respectively on Gowalla datasets. And examine the effect of the dimension to the performance of the proposed methods. 

%
%
%

Fig. \ref{FIG-dimension} shows the performance of the methods in terms of precision. 
The dimension of the dataset has a small effect to the proposed method. 
The small fluctuation of the accuracy is caused by the number of population instead of the dimension.
This is because a bigger number of location reduces the number of workers who have the same prefix. 
However, the precision of BSL and IBSL decreases significantly with the increasing of the number of locations. 
Specifically, as shown in Fig. \ref{Fig-locationG}, the LDPss1 achieves a precision at $0.93$ when there are $9$ locations and the precision is still around $0.9$ when the number of location increases to $16$ and $25$. 
However, the BSL achieves around $0.18$ when the location number is $9$, when the location number increases to $16$ and $25$, the precision of BSL achieves $0.15$ and $0.06$ respectively, which sees a decrease by $3\%$ and $12\%$ respectively. A similar result can be found with other privacy budgets. 
In the case of our proposed methods, the use of secret sharing mechanism as a replacement of randomized response for the reporting process eliminates the correlation between the number of location and its precision. However, for the BSL and IBSL, the number of location affects the dimension of the data. The higher dimension means a higher variance, which reduces the accuracy of the statistic result.

\subsection{Performance with different alpha}

%

We found that the method LDPss1 always outperforms LDPss2.  
This is because the LDPss2 conducted double sampling, which reduces the number of functional workers. 
We examine the effect of the sampling probability to LDPss2 by varying the parameter $\alpha$. 
Fig. \ref{alpa} shows the results. 
We varied the number of $\alpha$ between $0.2$ and $1$ in Step $0.2$ on three datasets. 
Smaller $\alpha$ means smaller functional workers are selected to contribute to the statistic. 
We found that LDPss2 has a good performance among all the settings even with a small $\alpha$. 
The precision is increasing with the increase of $\alpha$. 
As shown in Fig. \ref{alpa}, when $\alpha=0.2$, the LDPss2 still can achieve around $0.7$ precision with $\epsilon=0.1$ on Gowalla dataset and achieves $0.84$ when $\alpha=0.8$. 
It has a higher accuracy when the privacy budget $\epsilon$ becomes bigger. 
We can find similar results on both the BrightKite dataset and synthetic dataset.

\section{Related work} \label{RW}
There are a series of works study the problem of frequency estimation following local differential privacy \citep{fanti2016building, wang2019locally,jia2019calibrate,wang2019consistent, zhao2019ldpart}. Among which, three are three types of methods are proposed to deal with high dimensional data statistics. 

\textit{Hash-based method.} 
Erlingsson et al. \citep{erlingsson2014rappor} proposed randomized Aggregatable Privacy-Preserving Ordinal Response (RAPPOR) technology,
which applied the randomized response to the Bloom filters. 
Bassily and Smith \citep{bassily2015local} proposed a random matrix-based method. 
The method is similar to hash-based method. 
The user' value $v\in \mathbb{R}^d $ is mapped to a single bit 
and then the randomized response is performed on this bit. 
To improve the accuracy of the estimation, 
Wang et al. \citep{wang2017locally} proposed an Optimal Local Hasing (OLH) method that hashes the value into $g$ bit, where $g>1$. 
The hash-based method can reduce the query times efficiently by reducing the dimension of the value to a much lower dimension.
However, the collision problem of hash function has to be considered. 
And the decoding process is much more complex. 

\textit{Partition-based method.} The partition-based method tries to identify the frequent values in vector $t$ without going through every bit in $t$. 
Fanti et al. \citep{fanti2016building} proposed to partition the vector into $s$ segments 
and each user only needs to report $2$ segments instead of the whole value. 
The principle is that if a value is frequent, the segment of the value is also frequent. 
Wang et al. \citep{wang2019locally} partition users
into $g$ groups, each user in a group reports a prefix. 
The aggregator estimates the frequent values iteratively.
Partition-based method increases efficiency by reducing the query times.
However, it increases the other computational cost, such as the construction of the candidate set. 
When $g$ is big, the candidate set $C$ is hard to be enumerated. 

\textit{Tree-based method.} 
Bassily et al. \citep{bassily2017practical} proposed a TreeHist protocol which transforms the users' items to binary strings and builds a binary prefix tree. 
The proposed method ensures the number of surviving nodes in each level cannot exceed $O\left(\sqrt{\frac{n}{\log d \log n}}\right)$. 
Wang et al. \citep{wang2018privtrie} proposed an adaptive method to build the tree. 
Specifically, they proposed a novel user candidate set construction method, which defines the available user set $U(v)$ ($v$ refers to the node of the tree) as the set of users who have not participated in the estimation of the support value for any of $v$'s ancestor nodes. 
Tree-based method is not limited to the binary attributes. 
However, there has not been any specific method proposed which reduces the effect of the dimension of the data. This leads to output with high variance and large privacy budget consumption.

\section{Conclusion} \label{C}

In this paper, we considered the private hot path statistic problem 
and proposed a novel solution that enables the workers to participate in crowdsourcing platforms without disclosing their location. 
Specifically, the proposed solution combines the additive secret sharing and local differential privacy technologies. 
The proposed method absorbed the advantages of both secure sharing and local differential privacy, 
which results in accurate statistical results with strict privacy protection and small communication cost. 
We evaluated the performance through extensive experiments 
and the results prove that our method achieves an excellent accuracy compared with the state-of-the-arts.

 \section*{Acknowledgment}
  This research is supported by the National Research Foundation, Prime Minister’s Office, Singapore under its Strategic Capability Research Centres Funding Initiative. 

\appendices




\ifCLASSOPTIONcaptionsoff
  \newpage
\fi



%


\end{document}